
\documentclass{ws-rmta}
\newcommand{\e}{{\rm e}}
\newcommand{\dd}{{\rm d}}

\makeatletter
\makeatother

\begin{document}

\markboth{Huang and Wei}
{Instructions for Typing Manuscripts (Paper's Title)}

%
\catchline{}{}{}{}{}
%

\title{Non-intersecting Squared Bessel Process: Spectral Moments and Dynamical Entanglement Entropy}

\author{Youyi Huang}

\address{University of Central Missouri\\
Warrensburg, Missouri 64093, USA\,\\ 
\email{yhuang@ucmo.edu} }

\author{Lu Wei}

\address{Texas Tech University\\
Lubbock, Texas 79409, USA\\
\email{luwei@ttu.edu} }

\maketitle


\begin{abstract}
Statistical ensembles of reduced density matrices of bipartite quantum systems play a central role in entanglement estimation, but do not capture the non-stationary nature of entanglement relevant to realistic quantum information processing. To address this limitation, we propose a dynamical extension of the Hilbert-Schmidt ensemble, a baseline statistical model for entanglement estimation, arising from non-intersecting squared Bessel processes and perform entanglement estimation via average entanglement entropy and quantum purity. The investigation is enabled by finding spectral moments of the proposed dynamical ensemble, which serves as a new approach for systematic computation of entanglement metrics. Along the way, we also obtain new results for the underlying multiple orthogonal polynomials of modified Bessel weights, including structure and recurrence relations, and a Christoffel-Darboux formula for the correlation kernels.   
\end{abstract}

\keywords{Determinantal point process; entanglement estimation; entanglement entropy; random matrix theory; quantum purity; spectral moments; squared Bessel process; von Neumann entropy}


\section{Introduction}
Quantifying entanglement is a central problem in quantum information theory, with direct implications for quantum communication, computation, and many-body physics. For bipartite quantum systems in random pure states, the reduced density matrix of a subsystem exhibits highly nontrivial spectral fluctuations, characterized by strong eigenvalue correlations, which are naturally described within random matrix theory~\cite{Mehta,Forrester,Majumdar, MajumdarVivo2012}. In Page’s seminal work~\cite{Page93}, the eigenvalues of reduced density matrices follow the Hilbert-Schmidt ensemble (fixed-trace Wishart-Laguerre ensemble), where the degree of entanglement is estimated via cumulants of entanglement metrics such as entanglement entropy~\cite{Page93,Foong94,Ruiz95,VPO16,Wei17,Wei20,HWC21,HW25}, quantum purity~\cite{Lubkin78, ZyczkowskiSommers2001,Giraud07}, and entanglement capacity~\cite{Okuyama21, Wei23C}. Similar statistical formulations for other random state ensembles have been considered in the literature, notably the Bures-Hall ensemble~\cite{SZ04,OSZ10, Bortola09,Bortola10,Borot12,Bortola14,FK16, SK19, Wei20BHA,Wei20BH,LW21,WHW25BH} and fermionic Gaussian ensemble~\cite{BHK21,BHKRV22,HW22,HW23,HW23C}. Despite substantial progress in those statistical settings, such results are inadequate for realistic quantum systems, where entanglement is inherently non-stationary. Studies of entanglement estimation under real-world scenario of time-dependent stochastic models remain relatively limited. Time-dependent dynamical random matrix ensembles have, however, been extensively studied including most notably the Dyson Brownian motion and its extensions~\cite{Dyson1962,BK04, KT04,KT07,KZK08,KMW09,Forrester, KT2011,DelvauxKuijlaarsZhang2011Tacnode}. Dynamical random matrix framework provides a natural and well-defined mechanism that is well suited for introducing time-dependent deformations into statistical quantum state models to investigate the evolution of quantum entanglement. Understanding this behavior helps identify statistical features of entanglement that are relevant for realistic quantum applications, where entanglement evolves over time during state preparation and algorithm execution.

In this work, we propose a dynamical spectral model generated by non-intersecting squared Bessel processes~\cite{{KT04,KT07,KZK08,KMW09,KT2011}}, which constitutes a dynamical generalization of the Hilbert-Schmidt ensemble and reproduces it in a well-defined limit. The construction introduces genuine time dependence while remaining analytically tractable for exact computation. The main results are the explicit formulas of average purity and average entanglement entropy that characterize the typical entanglement behavior of the considered dynamical ensemble and recover the corresponding results for the Hilbert-Schmidt ensemble in the same limit.

Technically, rather than relying on direct, case-by-case computations~\cite{Foong94,Ruiz95,SZ04, VPO16,Wei17,SK19,Wei20,Wei20BHA,Wei20BH,HWC21,LW21, HW22,HW23,Wei23C, HW23C,HW25, WHW25BH}, these results follow from new recurrence relations for spectral moments derived in this work. Spectral moments is a fundamental object in random matrix theory~\cite{HarerZagier1986,HaagerupThorbjornsen2003,Ledoux04EJP,Ledoux2009GOE,DubrovinYang2017,CundenDahlqvistOConnell21, GisonniGravaRuzza20, CundenDahlqvistOConnell21, ForresterRahman21, ForresterEtAl23qPearson, Deitmar23,YangZhou2023Dessins, Byun24Elliptic, ByunForrester24Ginibre, ByunForresterOh24qGUE, AkemannByunOh25, ByunJungOh25, ForresterNishigaki26}, encoding global information about eigenvalue distributions and underpinning many classical results such as Wigner's semicircle law~\cite{Wigner1958Distribution}, the Marchenko-Pastur law for sample covariance matrices~\cite{MarchenkoPastur1967}, genus expansions and map enumeration arising from exact trace moment formulas~\cite{HarerZagier1986}, free convolution limits and asymptotic freeness in free probability~\cite{Voiculescu1991}, central limit theorems for linear eigenvalue statistics~\cite{Johansson1998}, and strong convergence results for random matrices~\cite{HaagerupThorbjornsen2003,Ledoux2009GOE}. Spectral moments often satisfy exact recurrence relations, see for example, the prototypical cases in~\cite{HarerZagier1986,HaagerupThorbjornsen2003,Ledoux2009GOE}. These recursions provide a powerful tool for both finite-size computations and asymptotic analysis of random matrix ensembles. While existing works have mostly focused on spectral moments of integer order, we extend this framework to real orders, naturally leading to logarithmic linear statistics relevant to entanglement entropy. This approach builds on the framework recently developed in~\cite{HW25}, where spectral moments serve as fundamental building blocks for the systematic derivation of higher-order cumulants of entanglement entropy in the fixed-trace Wishart–Laguerre ensemble.

\subsection{Non-intersecting squared Bessel process}
We introduce here the formulation that leads to the non-intersecting squared Bessel process~\cite{CV03,CV032,KT04,KT07,KZK08,ForresterNagao2008,KMW09,Forrester,KT2011,DelvauxKuijlaarsZhang2011,DelvauxKuijlaarsZhang2011,Katori2016Book,LiuYaoZhang25HardEdgeTacnode}. Let $X(t)$ be a one-dimensional diffusion with transition probability density $\widehat P_t(x,y)$. By the Karlin-McGregor theorem~\cite{KM}, the joint probability density of positions $(x_1,\dots,x_n)=(X_1(t),\dots,X_n(t))$ of $n$ independent copies $X_j$, $j=1,\dots,n$ with endpoints
\begin{equation}\label{eq:condition}
X_j(0)=a_j,\qquad X_j(T)=b_j,\qquad 
a_1<\cdots<a_n,\ \ b_1<\cdots<b_n,
\end{equation}
and conditioned to never intersect over $(0,T)$, is a biorthogonal ensemble~\cite{Borodin}
\begin{equation}\label{eq:bi}
\frac{1}{Z}\det\left[f_j(x_k)\right]_{1\leq j, k \leq n}
               \det\left[g_j(x_k)\right]_{1\leq j, k \leq n},
\qquad 0\le x_1<\cdots<x_n.
\end{equation}
Here $Z>0$ denotes the normalization constant and 
\begin{equation}
f_j(x) = \widehat{P}_t(a_j,x), \qquad
g_j(x) = \widehat{P}_{T-t}(x,b_j), \qquad 0<t<T<\infty.
\end{equation}

The joint probability density of the non-intersecting squared Bessel paths is obtained by taking $\widehat{P}_t(x,y)$ the transition density of a squared Bessel process
\begin{equation}\label{eq:bpb}
\widehat{P}_t(x,y)=
\frac{1}{2t}
\left(\frac{y}{x}\right)^{\alpha/2}
\e^{-\tfrac{1}{2t}(x+y)}I_{\alpha}\!\left(\frac{\sqrt{xy}}{t}\right), \qquad x, y>0,~~~\alpha>-1
\end{equation}
with
\begin{equation}\label{eq:Iv}
    I_\nu(z) = \left( \tfrac{1}{2} z \right)^\nu \sum_{i=0}^\infty \frac{\left( \tfrac{1}{4} z^2 \right)^i}{\Gamma(i+1) \Gamma(\nu + i + 1)}
\end{equation}
denoting the modified Bessel function.
\begin{figure}[h]
\centerline{\psfig{file=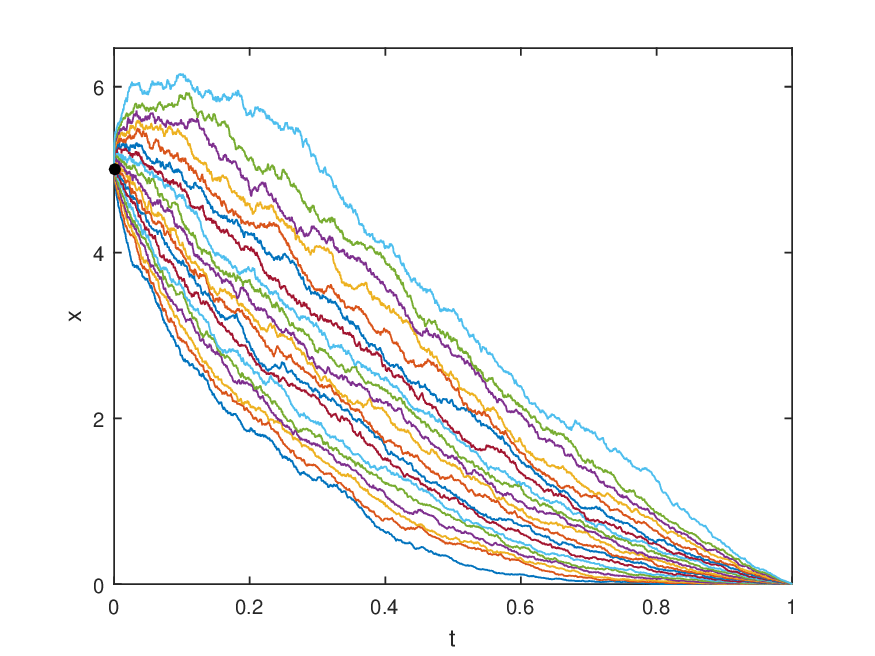,width=4.0in}}
\vspace*{8pt}
\caption{Numerical simulation of 20 rescaled non-intersecting squared Bessel paths with $a = 5$, $T=1$.}\label{fig1}
\end{figure}
The present work focuses on the case~\cite{KMW09}
\begin{equation}\label{eq:confluent}
a_1 = \cdots = a_n \to a>0, \qquad b_1 = \cdots = b_n \to 0,
\end{equation}
where the joint probability density~(\ref{eq:bi}) is specified by
\begin{eqnarray}\label{eq:bifg}
    f_{2j-1}(x) = x^{j-1} w_1(x), \qquad f_{2j}(x) = x^{j-1} w_2(x), \qquad g_j(x)     = x^{j-1},
\end{eqnarray}
with weights 
\begin{eqnarray}
w_1(x)&=&x^{\frac{\alpha}{2}}\e^{-\frac{T}{2t(T-t)} x}\, I_{\alpha}\!\left(\frac{\sqrt{a\,x}}{t}\right),\label{eq:we1}\\
w_2(x)&=&x^{\frac{\alpha+1}{2}}\e^{-\frac{T}{2t(T-t)} x}\, I_{\alpha+1}\!\left(\frac{\sqrt{a\,x}}{t}\right).\label{eq:we2}
\end{eqnarray}
See Figure \ref{fig1} for a numerical simulation of the case (\ref{eq:confluent}) with $a = 5$.

In the limit $a \to 0$, the biorthogonal structure~(\ref{eq:bi}) further collapses to the Wishart-Laguerre with a deterministic rescaling in the exponential weight,
\begin{equation}\label{eq:gwi}
\frac{1}{Z}\prod_{i=1}^{n}x_i^\alpha \e^{-\frac{T}{2t(T-t)} x_i} \prod_{1\leq i<j\leq n}(x_i-x_j)^2,
\end{equation}
whose correlation kernel is expressed in terms of Laguerre polynomials. Note that despite the rescaled Laguerre weight, the ensemble is not genuinely time-dependent after trace normalization. The corresponding fixed-trace ensemble remains Hilbert-Schmidt ensemble~(\ref{eq:hs}), and the entanglement entropy can be computed exactly as in the standard Wishart-Laguerre case, cf.~\cite{Page93,Ruiz95,VPO16,Wei17,Wei20,HWC21,HW25}.  

The considered regime~(\ref{eq:confluent}) therefore constitutes a simplest yet non-trivial time-dependent generalization of Page’s setting~\cite{Page93}: It preserves the static structure in the limit $a\to 0$, while introducing a dynamical deformation through the stochastic evolution of the
non-intersecting paths.



The necessary definitions are presented first before moving on to the main text. We denote two families of linear statistics over the non-intersecting squared Bessel process (\ref{eq:bi}) conditioned at (\ref{eq:confluent}), respectively being  
\begin{equation}\label{eq:Rk}
R_k = \sum_{i=1}^{n} x_i^k,
\end{equation}
and
\begin{equation}\label{eq:Tk}
T_k = \sum_{i=1}^{n} x_i^k\ln x_i,
\end{equation}
where $k\geq 0$  and we omit the subscript when $k=1$. The $k$-th spectral moment is defined by
\begin{equation}\label{eq:mk}
m_k=\mathbb{E}[R_k],
\end{equation}
where the expectation is taken over the non-intersecting squared Bessel process~(\ref{eq:bi}) as specified in (\ref{eq:bifg}).

The rest of the paper is structured as follows. In Section~\ref{sec:2}, we derive the spectral moments as defined above. Section~\ref{sec:bo} presents both existing and new results on biorthogonal polynomials associated with non-intersecting squared Bessel processes, which are required for establishing the recurrence relations of the spectral moments in Section~\ref{sec:rr}. In Section~\ref{sec:ee}, we perform entanglement estimation of the dynamical model using the findings developed in Section~\ref{sec:2}, with the main results on the average quantum purity and the average entanglement entropy derived in Sections~\ref{sec:p} and~\ref{sec:v}, respectively.

\section{Spectral Moments of Non-intersecting Squared Bessel Process}\label{sec:2}
We investigate spectral moments, which serve as the key step in the current framework of computing average entanglement metrics in Section~\ref{sec:ee}. In particular, we derive recurrence relations of the spectral moments, stated in Proposition~\ref{prop:r1} and Proposition~\ref{prop:r2}. The central idea in establishing these recurrence relations is the repeated use of integration by parts. This allows the spectral moments to be expressed in terms of integrals involving orthogonal polynomials, which are then systematically eliminated by deriving additional compatible relations through further applications of integration by parts. The entire procedure relies fundamentally on the recurrence and structure relations of the underlying biorthogonal polynomials, summarized in Propositions~\ref{propQ} and~\ref{propstruP}, as well as on the Christoffel-Darboux formula for the correlation kernels in Proposition~\ref{prop:cd} and its derivative form given in Proposition~\ref{prop:cdd}. These necessary polynomial and kernel-level results are first discussed in Section~\ref{sec:bo}, prior to the derivation of the recurrence relations in Section~\ref{sec:rr}.

\subsection{Determinantal process of non-intersecting squared Bessel paths: Existing and new results}\label{sec:bo}
The biorthogonal structure~(\ref{eq:bi}) ensures the associated point process is determinantal: The $l$-point correlation function for every $1\leq l\leq n$ can be written as 
\begin{equation}\label{eq:gl}
   g_l(x_1,\dots,x_l)=\det\left[K(x_i,x_j)\right]_{i,j=1}^{l},
\end{equation}
the correlation kernel is given in~\cite{KMW09} as
\begin{equation}\label{eq:K-MOP}
  K(x,y) = \sum_{j=0}^{n-1} Q_j(x)P_j(y),
\end{equation}
where 
\begin{equation}\label{eq:qj}
Q_{j}(x)=A_{j,1}(x)w_1(x)+A_{j,2}(x)w_2(x)
\end{equation}
with $A_{j,1}$ and $A_{j,2}$ being polynomials of degrees $\lceil j/2 \rceil$ and $\lfloor j/2 \rfloor$, respectively, and $P_j$ is a monic polynomial of degree $j$. They satisfy the biorthogonality property
\begin{equation}\label{eq:orth}
    \int_{0}^{\infty}Q_j(x) P_i(x) \dd x =\delta_{ji},\qquad j,i=1,\dots,n-1.
\end{equation}
The orthogonality conditions of $Q_{j}$ and $P_{j}$ that ensures the above property are 
\begin{eqnarray}
    \int_0^\infty Q_j(x)x^i \dd x&=& 0,\qquad i=0,1,\dots,j-1,\label{eq:orth1}\\
    \int_0^\infty Q_j(x)x^j \dd x&=& 1\label{eq:orth2}
\end{eqnarray}
and
\begin{eqnarray}
    \int_0^\infty P_j(x)x^iw_1(x)  \dd x&=& 0, \qquad i=0,1,\dots, \left\lceil \frac{j}{2} \right\rceil-1,\label{eq:orth3}\\
    \int_0^\infty P_j(x)x^iw_2(x)  \dd x&=& 0, \qquad i=0,1,\dots, \left\lfloor \frac{j}{2} \right\rfloor-1\label{eq:orth4},
\end{eqnarray}
respectively. The single index \(n\) of the polynomials encodes the underlying multi-index \((j_1,j_2)\) via
\(j = j_1 + j_2\), with \(j_1 = \lceil j/2\rceil\) and \(j_2 = \lfloor j/2\rfloor\). The polynomials $A_{j,1}$ and $A_{j,2}$ are called multiple orthogonal polynomials of type~I, and $P_j$  the multiple orthogonal polynomials of type~II.

\subsubsection{Existing results on biorthogonal polynomials}
The multiple orthogonal polynomials associated with weights
\begin{eqnarray}
w_{\alpha,c}(x)=x^{\frac{\alpha}{2}}\e^{-c x}  I_\alpha\left(2 \sqrt{x}\right), \qquad  w_{\alpha+1,c}(x)=x^{\frac{\alpha+1}{2}}\e^{-c x}  I_\alpha\left(2 \sqrt{x}\right) \label{eq:wea}
\end{eqnarray}
that share the same functional form as those in~(\ref{eq:we1})--(\ref{eq:we2}) have been studied by Coussement and Van Assche in~\cite{CV03}. 
Let
\(\widetilde{A}_{n,1}^{(\alpha,c)}\) and \(\widetilde{A}_{n,2}^{(\alpha,c)}\) denote the type~I polynomials, and 
\begin{equation}
\widetilde{Q}_{n}^{(\alpha,c)}(x)=\widetilde{A}_{n,1}^{(\alpha,c)}(x)w_{\alpha,c}(x)+\widetilde{A}_{n,2}^{(\alpha,c)}(x)w_{\alpha+1,c}(x),
\end{equation}
while $\widetilde{P}_{n}^{(\alpha,c)}$ denotes the associated type~II polynomials.   They are characterized by the orthogonality conditions
\begin{equation}
    \int_0^\infty \widetilde{Q}_n^{(\alpha,c)}(x)\,x^j \,\dd x = 0,
    \qquad j=0,1,\dots,n-1, \label{eq:orthot1}
\end{equation}
and
\begin{align}
    \int_0^\infty \widetilde{P}_n^{(\alpha,c)}(x)\,x^j w_{\alpha,c}(x)\, \dd x &= 0,
        & j&=0,1,\dots, \Big\lceil \tfrac{n}{2} \Big\rceil -1,\label{eq:orthot2}\\
    \int_0^\infty \widetilde{P}_n^{(\alpha,c)}(x)\,x^j w_{\alpha+1,c}(x)\, \dd x &= 0,
        & j&=0,1,\dots, \Big\lfloor \tfrac{n}{2} \Big\rfloor -1.\label{eq:orthot3}
\end{align}

We summarize here the existing results established in~\cite{CV03} that are revelent to our work, the corresponding statements for $P_n$ and $Q_n$ then follow by a rescaling \begin{eqnarray}
Q_n(x)&=&\frac{1}{\widetilde{H}_n}\left(\frac{\beta}{c}\right)^{n+1}\widetilde{Q}_n^{(\alpha,c)}\!\left(\frac{\beta}{c}x\right)\label{eq:r1}\\
P_n(x)&=&\left(\frac{\beta}{c}\right)^{-n}\widetilde{P}_n^{(\alpha,c)}\!\left(\frac{\beta}{c}x\right)\label{eq:r2},
\end{eqnarray} 
where
\begin{eqnarray}
 c&=&\frac{T}{2t(T-t)}\frac{4t^2}{a}\label{eq:c}\\
\beta&=&\frac{T}{2t(T-t)},
\end{eqnarray}
and $\widetilde{H}_k$ denotes the normalization constant for $ \widetilde{Q}_n^{(\alpha,c)}$,
\begin{equation}
        \widetilde{H}_k=\int_0^{\infty} x^{n} \widetilde{Q}_n^{(\alpha,c)}(x)\, \dd x,
\end{equation}
which, by applying (\ref{eq:xqk}) with the specialization $s = n+1$, admits the explicit expression 
\begin{equation}
 \widetilde{H}_k= (-1)^n \Gamma(n+1)\, \e^{1/c} c^{-\alpha - n - 1}.    
\end{equation}



\subsubsection*{- Rodrigues' formula} The type~I multiple orthogonal polynomials are determined by (\ref{eq:orthot1}) up to a multiplicative constant.  
Fixing this constant to be \(1\), the functions $\widetilde{Q}_n^{(\alpha,c)}$ satisfy the Rodrigues' formula 
 \begin{equation}\label{eq:rdq}
     \widetilde{Q}_n^{(\alpha,c)}(x)=\frac{\dd ^n}{\dd x^n}w_{\alpha +n,c}(x).
 \end{equation}

\subsubsection*{- Recurrence relations}
The 3-point relation provides a parameter-lowering relation in $\alpha$ for $\widetilde{Q}_{n}^{(\alpha ,c)}$ is
\begin{equation}\label{eq:3rql}
    \widetilde{Q}_{n}^{(\alpha+1,c)}(x)=\frac{1}{c}\widetilde{Q}_{n}^{(\alpha,c)}(x)-\frac{1}{c}\widetilde{Q}_{n+1}^{(\alpha ,c)}(x).
\end{equation}
The type II multiple orthogonal polynomials $\widetilde{P}^{(\alpha,c)}_n(x)$ admit the four-point recurrence relation
\begin{eqnarray}\label{eq:recurp}
x \widetilde{P}^{(\alpha,c)}_n(x)&=&\widetilde{P}^{(\alpha,c)}_{n+1}(x)+\frac{c (2 n+\alpha+1)+1 }{c^2}\widetilde{P}^{(\alpha,c)}_n(x)+\frac{n (c (\alpha +n)+2)}{c^3} \widetilde{P}^{(\alpha,c)}_{n-1}(x)\nonumber\\
&&+\frac{n (n-1) }{c^4}\widetilde{P}^{(\alpha,c)}_{n-2}(x).
\end{eqnarray}

With (\ref{eq:r2}) we reproduce the recurrence relation of type II multiple orthogonal polynomials $P_n$ provided in~\cite{KMW09},
\begin{eqnarray}
x P_{n}(x)&=&P_{n+1}(x)+\frac{c (\alpha +2 n+1)+1 }{\beta  c}P_n(x)+\frac{n (c (\alpha +n)+2) }{\beta ^2 c}P_{n-1}(x)\nonumber\\
&&+\frac{n(n-1) }{\beta ^3 c}P_{n-2}(x),\label{eq:recurP}
\end{eqnarray}
from the recurrence relation~(\ref{eq:recurp}) of $\widetilde{P}_n^{(\alpha,c)}$.

 \subsubsection*{- Explicit formulas}
 The explicit expressions for $\widetilde{Q}^{(\alpha,c)}_n$ and $\widetilde{P}^{(\alpha,c)}_n$ are respectively given by
\begin{equation}
    \widetilde{Q}_{n}^{(\alpha,c)}(x)=\sum _{k=0}^n (-c)^k \binom{n}{k} w_{k+\alpha,c}(x),
\end{equation}
 and
\begin{eqnarray}
\widetilde{P}^{(\alpha,c)}_n(x)&=&\frac{(-1)^n }{c^{2 n}}\sum _{k=0}^n c^k (n-k+1)_k L_k^{(\alpha)}(c x).\label{eq:pklk}\\
&=&\sum _{k=0}^n \frac{\Gamma (n+1)}{\Gamma (k+1)} c^{k-n} L_{n-k}^{(-n-\alpha -1)}\!\left(\frac{1}{c}\right)x^k.\label{eq:pklkx}
\end{eqnarray}
Here $L_k^{(\alpha)}$ denotes the Laguerre polynomials, which admit the explicit expressions
\begin{eqnarray}
L_k^{(\alpha )}(x)  
&=&\sum_{j=0}^{k}(-1)^j \binom{k+\alpha}{k-j}\frac{x^j}{j!},
  \\
&=& \frac{(\alpha +1)_k}{n!} \, _1F_1(-k;\alpha +1;x).
\end{eqnarray}

\subsubsection*{- Differential properties}
The multiple orthogonal polynomials also admit the following differential properties,
\begin{equation}\label{eq:dfqhat1}
\frac{d}{dx} \widetilde{Q}_n^{(\alpha+1,c)}(x) = \widetilde{Q}_{n+1}^{(\alpha,c)}(x),
\end{equation}
\begin{equation}\label{eq:dfphat1}
  \frac{\dd}{\dd x} \widetilde{P}_n^{(\alpha,c)}(x)=\widetilde{P}_{n-1}^{(\alpha+1,c)}(x).  
\end{equation}

We also note that the results collected here are not exhaustive. For interested readers, we refer to~\cite{CV03,CV032} for a more complete account.

\subsubsection{New results on biorthogonal polynomials of non-intersecting squared Bessel process}
In this section, we present our new results on the multiple orthogonal polynomials and the associated correlation kernels for the biorthogonal ensemble arising from the non-intersecting squared Bessel process conditioned at~(\ref{eq:condition}). We first establish the fundamental recurrence and structure relations for the type~I functions $Q_n$, summarized in Proposition~\ref{propQ}, together with the recurrence relation for the type~II polynomials $P_n$ given in Proposition~\ref{propstruP}. These relations provide the basic algebraic framework for the underlying biorthogonal structure and are essential for performing kernel manipulations in a systematic way. Building on them, we then derive the Christoffel-Darboux formula for the correlation kernel and its derivative identities, given in Propositions~\ref{prop:cd} and~\ref{prop:cdd}, respectively. To the best of our knowledge, all of these results are new.



\begin{lemma}\label{lemmarrq}
The type I function $\widetilde{Q}_n^{(\alpha,c)}$ admits the four-point rule
\begin{equation}
x\widetilde{Q}_n^{(\alpha,c)}(x)=-n \widetilde{Q}_{n-1}^{(\alpha+1,c)}(x)-\frac{c (\alpha +n+1)+1 }{c}\widetilde{Q}_n^{(\alpha+1,c)}(x)-\frac{1}{c}\widetilde{Q}_{n+1}^{(\alpha+1,c)}(x).
\end{equation}
\end{lemma}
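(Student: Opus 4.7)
The plan is to reduce the four-point rule to the Rodrigues' formula~(\ref{eq:rdq}) combined with an elementary shift identity for the weight $w_{\mu,c}$ and the three-term parameter relation~(\ref{eq:3rql}). First I would establish the ``weight shift''
\begin{equation*}
x\, w_{\mu,c}(x) = (\mu+1)\, w_{\mu+1,c}(x) + w_{\mu+2,c}(x),
\end{equation*}
which follows from the modified Bessel three-term recurrence $I_{\nu-1}(z) - I_{\nu+1}(z) = \tfrac{2\nu}{z} I_\nu(z)$ applied at $z = 2\sqrt{x}$, $\nu = \mu+1$, after multiplying by the correct power $x^{(\mu+2)/2}$ to reconstitute weights of the form $x^{\beta/2} I_\beta(2\sqrt{x})$, then inserting the common factor $\e^{-cx}$.

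Next I would exploit the elementary operator commutator $D^n x = x D^n + n D^{n-1}$, where $D = \tfrac{\dd}{\dd x}$. Applying this to $w_{\alpha+n,c}$ and using the Rodrigues' formula~(\ref{eq:rdq}), once directly and once with the parameter shift $\alpha+n = (\alpha+1)+(n-1)$, gives
\begin{equation*}
D^n\!\left(x\, w_{\alpha+n,c}(x)\right) = x\,\widetilde{Q}_n^{(\alpha,c)}(x) + n\,\widetilde{Q}_{n-1}^{(\alpha+1,c)}(x).
\end{equation*}
On the other hand, applying $D^n$ to the right-hand side of the weight shift identity (with $\mu = \alpha + n$) and again recognising $D^n w_{\alpha+n+1,c} = \widetilde{Q}_n^{(\alpha+1,c)}$ and $D^n w_{\alpha+n+2,c} = \widetilde{Q}_n^{(\alpha+2,c)}$ by Rodrigues', produces an intermediate three-point identity expressing $x\widetilde{Q}_n^{(\alpha,c)}$ in terms of $\widetilde{Q}_{n-1}^{(\alpha+1,c)}$, $\widetilde{Q}_n^{(\alpha+1,c)}$ and $\widetilde{Q}_n^{(\alpha+2,c)}$.

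The last step is to eliminate the term $\widetilde{Q}_n^{(\alpha+2,c)}$ using the three-point parameter-lowering relation~(\ref{eq:3rql}) with $\alpha$ replaced by $\alpha+1$, which rewrites it as a linear combination of $\widetilde{Q}_n^{(\alpha+1,c)}$ and $\widetilde{Q}_{n+1}^{(\alpha+1,c)}$. Collecting the coefficients of $\widetilde{Q}_n^{(\alpha+1,c)}$ gives precisely $\tfrac{c(\alpha+n+1)+1}{c}$, and this yields the four-point rule.

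The only genuinely non-trivial ingredient is the weight shift identity, which is where the modified Bessel structure enters; everything else is bookkeeping via the commutator and (\ref{eq:3rql}). As a sanity check I would verify the resulting identity at $n=0$, where both sides reduce to a direct computation involving $w_{\alpha,c}$, $w_{\alpha+1,c}$ and $w_{\alpha+2,c}$, and at $n=1$, where $\widetilde{Q}_1^{(\alpha,c)} = w_{\alpha,c} - c\, w_{\alpha+1,c}$ makes the coefficient combinatorics explicit.
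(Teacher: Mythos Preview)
Your argument is correct and takes a genuinely different (and cleaner) route from the paper.

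The paper's proof works purely at the level of the functions $\widetilde Q_k^{(\alpha,c)}$: it applies the parameter-lowering relation~(\ref{eq:3rql}) to each of the three terms on the right-hand side to express everything with parameter $\alpha$, and then recognises the resulting four-term linear combination as $x\,\widetilde Q_n^{(\alpha,c)}$ by invoking the four-point recurrence~(\ref{eq:recurq}). That recurrence is only derived later, inside the proof of Proposition~\ref{propQ}, so the paper's argument carries a forward reference.

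Your approach instead drops down to the weight level, combining the Bessel three-term recurrence (via the weight shift $x\,w_{\mu,c}=(\mu+1)w_{\mu+1,c}+w_{\mu+2,c}$), the operator commutator $D^n x = xD^n + nD^{n-1}$, and Rodrigues' formula. This produces a three-point relation involving $\widetilde Q_n^{(\alpha+2,c)}$, and a single application of~(\ref{eq:3rql}) at level $\alpha+1$ finishes it. The payoff is a self-contained proof that does not require~(\ref{eq:recurq}) at all; in fact~(\ref{eq:recurq}) could itself be recovered from your intermediate three-point identity by applying~(\ref{eq:3rql}) in the opposite direction.

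One remark: carrying your computation through, the coefficient of $\widetilde Q_n^{(\alpha+1,c)}$ comes out as $+\tfrac{c(\alpha+n+1)+1}{c}$, not $-\tfrac{c(\alpha+n+1)+1}{c}$ as printed in the Lemma. This agrees with the paper's own equation~(\ref{eq:lemmarrq1}) inside the proof, so the sign in the displayed statement is evidently a typo; your argument establishes the correct version.
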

\begin{proof}
By~(\ref{eq:3rql}), we can have
\begin{eqnarray}
    &&-\frac{1}{c}\widetilde{Q}_{n+1}^{(\alpha+1,c)}(x)+\frac{c (\alpha +n+1)+1}{c}\widetilde{Q}_{n}^{(\alpha+1,c)}(x)-n\widetilde{Q}_{n-1}^{(\alpha+1,c)}(x)\nonumber\\
    &=&\frac{1}{c^2}\widetilde{Q}_{n+2}^{(\alpha,c)}(x)-\frac{c (\alpha +n+1)+2}{c^2} \widetilde{Q}_{n+1}^{(\alpha,c)}(x)+\frac{c (\alpha +2 n+1)+1 }{c^2}\widetilde{Q}_n^{(\alpha,c)}(x)\nonumber\\
    &&-\frac{n }{c}\widetilde{Q}_{n-1}^{(\alpha,c)}(x).\label{eq:lemmarrq1}
\end{eqnarray}
The proof of Lemma~\ref{lemmarrq} is completed by simplifying the right-hand side of~(\ref{eq:lemmarrq1}) to $x\,\widetilde{Q}_{m}^{(\alpha,c)}(x)$ using the four-point recurrence relation derived below in~(\ref{eq:recurq}).
\end{proof}

\begin{proposition}\label{propQ}
The recurrence relation and structure relation of type I functions $Q_n$ in (\ref{eq:qj}) are respectively given by 
\begin{eqnarray}\label{eq:recurQ}
 xQ_n(x)&=&\frac{(n+1) (n+2) }{\beta ^3 c}Q_{n+2}(x)+ \frac{(n+1) (c (\alpha +n+1)+2)}{\beta ^2 c} Q_{n+1}(x)\nonumber\\
 &&+\frac{c (\alpha +2 n+1)+1 }{\beta  c}Q_n(x)+Q_{n-1}(x),  
\end{eqnarray}
and
\begin{equation}\label{eq:structQ}
x\frac{\dd}{\dd x}Q_n(x)=(n+1)\!\left(-\frac{n+2 }{\beta ^2 c}Q_{n+2}(x)-\frac{c (\alpha +n+1)+1}{\beta  c} Q_{n+1}(x)-Q_n(x)\right).
\end{equation}
\end{proposition}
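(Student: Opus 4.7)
The plan is to establish both identities at the rescaled level of $\widetilde{Q}_n^{(\alpha,c)}$ and then transfer them to $Q_n$ via the rescaling (\ref{eq:r1}). At the $\widetilde{Q}$-level, the recurrence comes from a duality argument with $\widetilde{P}_n^{(\alpha,c)}$ via biorthogonality, while the structure relation comes from Rodrigues' formula (\ref{eq:rdq}) combined with the Leibniz rule and a weight identity arising from the modified Bessel recurrence.

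For the recurrence (\ref{eq:recurQ}), I will exploit the biorthogonality (\ref{eq:orth}). Writing $xQ_n(x)=\sum_k b_{n,k}Q_k(x)$ and $xP_m(x)=\sum_k a_{m,k}P_k(x)$, pairing each expansion against the opposite family and using $\int_0^\infty Q_n(x)P_m(x)\,\dd x=\delta_{n,m}$ yields the transpose identity $b_{n,m}=a_{m,n}$. Reading off the coefficients of (\ref{eq:recurP}) at $m\in\{n-1,n,n+1,n+2\}$ and transposing reproduces (\ref{eq:recurQ}) line-for-line. As a byproduct, the same duality at the level of $(\widetilde{P},\widetilde{Q})$ furnishes an independent four-point recurrence for $\widetilde{Q}_n^{(\alpha,c)}$, which is precisely the relation invoked in the proof of Lemma~\ref{lemmarrq}, thereby removing any apparent circularity.

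For the structure relation (\ref{eq:structQ}), I will compute $x\frac{\dd}{\dd x}\widetilde{Q}_n^{(\alpha,c)}(x)$ directly from Rodrigues' formula. Applying the Leibniz identity $\frac{\dd^{n+1}}{\dd x^{n+1}}(xf)=xf^{(n+1)}+(n+1)f^{(n)}$ with $f=w_{\alpha+n,c}$ gives
\begin{equation*}
x\frac{\dd}{\dd x}\widetilde{Q}_n^{(\alpha,c)}(x)=\frac{\dd^{n+1}}{\dd x^{n+1}}\bigl[xw_{\alpha+n,c}(x)\bigr]-(n+1)\widetilde{Q}_n^{(\alpha,c)}(x).
\end{equation*}
The modified Bessel three-term recurrence $I_{\nu-1}(z)-I_{\nu+1}(z)=(2\nu/z)I_\nu(z)$ evaluated at $z=2\sqrt{x}$ translates into the weight identity $xw_{\nu,c}(x)=(\nu+1)w_{\nu+1,c}(x)+w_{\nu+2,c}(x)$. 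Specializing to $\nu=\alpha+n$, differentiating $n+1$ times, and re-invoking Rodrigues' formula at weight indices $\alpha$ and $\alpha+1$, the two resulting derivatives are identified with $\widetilde{Q}_{n+1}^{(\alpha,c)}(x)$ and $\widetilde{Q}_{n+1}^{(\alpha+1,c)}(x)$. The parameter-lowering rule (\ref{eq:3rql}) then converts $\widetilde{Q}_{n+1}^{(\alpha+1,c)}$ into a combination of $\widetilde{Q}_{n+1}^{(\alpha,c)}$ and $\widetilde{Q}_{n+2}^{(\alpha,c)}$, yielding the structure relation at the $\widetilde{Q}$-level, and rescaling via (\ref{eq:r1}) delivers (\ref{eq:structQ}).

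The principal technical burden is bookkeeping in the rescaling: translating from $\widetilde{Q}_n^{(\alpha,c)}$ to $Q_n$ intertwines the powers $(\beta/c)^{n+1}$ with the alternating, factorial normalization $\widetilde{H}_n$, so one must track the consecutive ratios $\widetilde{H}_{n+1}/\widetilde{H}_n=-(n+1)/c$ and $\widetilde{H}_{n+2}/\widetilde{H}_n=(n+1)(n+2)/c^2$ carefully to recover the stated coefficients. The only nontrivial conceptual point is that the Rodrigues identification $\frac{\dd^{n+1}}{\dd x^{n+1}}w_{\alpha+n+2,c}=\widetilde{Q}_{n+1}^{(\alpha+1,c)}$ unavoidably brings in the shifted parameter $\alpha+1$, and it is precisely (\ref{eq:3rql}) that closes the system back onto $\widetilde{Q}^{(\alpha,c)}$ alone.
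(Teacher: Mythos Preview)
Your proposal is correct and takes a genuinely different route from the paper on both identities. For the recurrence~(\ref{eq:recurQ}), the paper differentiates the weight identity $x\,w_{v,c}'(x)=(v-cx)w_{v,c}(x)+w_{v+1,c}(x)$ at $v=n+\alpha+1$ using Rodrigues' formula~(\ref{eq:rdq}) and then collapses all parameters to $\alpha$ via~(\ref{eq:3rql}) to obtain the four-term recurrence~(\ref{eq:recurq}) before rescaling; you instead invoke the transpose identity $b_{n,m}=a_{m,n}$ coming from biorthogonality with the known $P$-recurrence~(\ref{eq:recurP}), which reads off~(\ref{eq:recurQ}) immediately and, as you note, supplies~(\ref{eq:recurq}) independently (this is precisely the mechanism the paper itself later exploits in~(\ref{eq:lspp})--(\ref{eq:lspq}) for the Christoffel--Darboux proof). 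For the structure relation~(\ref{eq:structQ}), the paper combines the differential identity~(\ref{eq:dfqhat1}) with Lemma~\ref{lemmarrq} (so that $x\widetilde{Q}_n'=x\widetilde{Q}_{n+1}^{(\alpha-1,c)}$ is re-expanded in the $\alpha$-family), whereas you apply Leibniz to Rodrigues' formula together with the Bessel-induced weight identity $xw_{\nu,c}=(\nu+1)w_{\nu+1,c}+w_{\nu+2,c}$ and close up via~(\ref{eq:3rql}) alone; both routes land on the same $\widetilde{Q}$-level relation. Your approach is slightly more self-contained in that it bypasses Lemma~\ref{lemmarrq} and the forward reference to~(\ref{eq:recurq}), while the paper's route has the advantage of producing Lemma~\ref{lemmarrq} as an intermediate result useful elsewhere. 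The one point you leave implicit---that $xQ_n$ actually lies in $\mathrm{span}\{Q_0,\dots,Q_{n+2}\}$ so that the expansion $xQ_n=\sum_k b_{n,k}Q_k$ is finite---follows from the degree count on $(A_{n,1},A_{n,2})$ and is standard, but is worth stating explicitly.
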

\begin{proof}
Using Rodrigues formula (\ref{eq:rdq}) to rewrite the terms in $n$-th derivative of the relation~\cite{CV03},
\begin{equation}
   x \frac{\dd}{\dd x}w_{v,c}(x)=(v-cx)w_{v,c}(x)+w_{v+1,c}(x)
\end{equation}
 for
\begin{equation}
v = n+\alpha+1,
\end{equation}
into type I functions, we arrive at
\begin{equation}
x \widetilde{Q}_{n+1}^{(\alpha ,c)}(x)=\widetilde{Q}_n^{(\alpha +2,c)}(x)-c n \widetilde{Q}_{n-1}^{(\alpha +2,c)}(x)+(\alpha +1) \widetilde{Q}_n^{(\alpha +1,c)}(x)-cx\widetilde{Q}_n^{(\alpha +1,c)}(x). \label{eq:rqd}
\end{equation}
We express all functions in terms of the parameter $\alpha$ only in the above relation (\ref{eq:rqd}) by applying (\ref{eq:3rql}). This gives the four-point recurrence relation 
\begin{eqnarray}\label{eq:recurq}
x \widetilde{Q}_n^{(\alpha,c)}(x)&=&\frac{1}{c^2}\widetilde{Q}_{n+2}^{(\alpha,c)}(x)-\frac{c (\alpha +n+1)+2 }{c^2}\widetilde{Q}_{n+1}^{(\alpha,c)}(x)\nonumber\\
&&+\frac{c (\alpha +2 n+1)+1 }{c^2}\widetilde{Q}_n^{(\alpha,c)}(x)-\frac{n}{c} \widetilde{Q}_{n-1}^{(\alpha,c)}(x).    
\end{eqnarray}
The result (\ref{eq:recurQ}) is thus proved after applying (\ref{eq:r1}). 

We next establish the structure relation~(\ref{eq:structQ}). By using Lemma~\ref{lemmarrq} to express the polynomials with parameter $\alpha$ in terms of the polynomials with parameter $\alpha+1$ in~(\ref{eq:dfqhat1}), we obtain the structure relation for $\widetilde{Q}_n^{(\alpha,c)}$ as
\begin{equation}
x\frac{\dd}{\dd x}\widetilde{Q}_n^{(\alpha,c)}(x)=-\frac{1}{c}\widetilde{Q}_{n+2}^{(\alpha,c)}(x)+\frac{c(\alpha +n+1)+1}{c}\widetilde{Q}_{n+1}^{(\alpha,c)}(x)-\frac{c (n+1) }{c}\widetilde{Q}_n^{(\alpha,c)}(x).
\end{equation}
The claimed relation~(\ref{eq:structQ}) follows by inserting (\ref{eq:r1}). 
\end{proof}

Lemma~\ref{corop} summarizes several additional properties of $\widetilde{P}_n^{(\alpha,c)}$ that complement those established in~\cite{CV03}.
\begin{lemma}\label{corop}
The type II polynomial $\widetilde{P}_n^{(\alpha,c)}$ admits the following relations
 \begin{eqnarray}
&& \widetilde{P}_{n}^{(\alpha,c)}(x)=\frac{m}{c}\widetilde{P}_{n-1}^{(\alpha +1,c)}(x)+\widetilde{P}_{n}^{(\alpha +1,c)}(x),\label{eq:pr1}\\
&&x \widetilde{P}_n^{(\alpha +1,c)}(x)=\frac{n }{c^3}\widetilde{P}_{n-1}^{(\alpha,c)}(x)+\frac{c (n+\alpha+1)+1}{c^2}\widetilde{P}_n^{(\alpha,c)}(x)+\widetilde{P}_{n+1}^{(\alpha,c)}(x),\label{eq:pr2}\\
&&\widetilde{P}_n^{(\alpha +1,c)}(x)=\sum_{j=0}^n \frac{n!}{j!}(-1)^{j+n} c^{j-n} \widetilde{P}_j^{(\alpha,c)}(x).\label{eq:pr3}
 \end{eqnarray}
\end{lemma}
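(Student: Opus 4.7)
The strategy is to reduce (\ref{eq:pr1}) and (\ref{eq:pr3}) to the explicit Laguerre expansion~(\ref{eq:pklk}) together with a standard contiguous relation for the classical Laguerre polynomials, and then to derive (\ref{eq:pr2}) algebraically from (\ref{eq:pr1}) combined with the four-point recurrence~(\ref{eq:recurp}). All three identities then follow without invoking any further orthogonality arguments.

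For (\ref{eq:pr1}), I would substitute the Laguerre contiguous identity $L_k^{(\alpha)}(y)=L_k^{(\alpha+1)}(y)-L_{k-1}^{(\alpha+1)}(y)$, specialized to $y=cx$, termwise into the explicit formula~(\ref{eq:pklk}) for $\widetilde{P}_n^{(\alpha,c)}(x)$. Splitting the sum into two pieces and shifting the summation index in the piece carrying $L_{k-1}^{(\alpha+1)}$, the Pochhammer factorization $(n-k)_{k+1}=n\,(n-k)_k$ reorganizes the shifted piece into $(n/c)\,\widetilde{P}_{n-1}^{(\alpha+1,c)}(x)$, while the unshifted piece is immediately recognized as $\widetilde{P}_n^{(\alpha+1,c)}(x)$. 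This yields (\ref{eq:pr1}) with coefficient $n/c$; the symbol $m$ in the statement appears to be a typographical slip for $n$.

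Identity (\ref{eq:pr3}) is then obtained by iterating (\ref{eq:pr1}). Rewriting it as $\widetilde{P}_n^{(\alpha+1,c)}=\widetilde{P}_n^{(\alpha,c)}-(n/c)\widetilde{P}_{n-1}^{(\alpha+1,c)}$, I would apply the same recursion to $\widetilde{P}_{n-1}^{(\alpha+1,c)}$ to reduce it in turn, and continue down to index $0$. A straightforward induction on $n$ collects the telescoping coefficients into $(-1)^{n-j}\,n!/(j!\,c^{n-j})$, which agrees with $(-1)^{n+j}\,n!/(j!\,c^{n-j})$ as stated in (\ref{eq:pr3}).

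For (\ref{eq:pr2}), the cleanest route bypasses the explicit formula: I would apply (\ref{eq:pr1}) at indices $n+1$, $n$, and $n-1$ to the right-hand side of (\ref{eq:pr2}), thereby expressing the three $\widetilde{P}^{(\alpha,c)}$ terms in the basis $\{\widetilde{P}_j^{(\alpha+1,c)}\}_{j=n-2}^{n+1}$. Collecting the coefficients of $\widetilde{P}_{n+1}^{(\alpha+1,c)}$, $\widetilde{P}_n^{(\alpha+1,c)}$, $\widetilde{P}_{n-1}^{(\alpha+1,c)}$, and $\widetilde{P}_{n-2}^{(\alpha+1,c)}$ should yield, after elementary simplification, exactly the expansion prescribed by the four-point recurrence~(\ref{eq:recurp}) evaluated at parameter $\alpha+1$, which by definition equals $x\,\widetilde{P}_n^{(\alpha+1,c)}(x)$. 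The main obstacle throughout is the Pochhammer bookkeeping in the first step, but this is essentially routine; no deeper difficulty is anticipated.
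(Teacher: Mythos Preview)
Your proposal is correct. For (\ref{eq:pr2}) and (\ref{eq:pr3}) you and the paper do exactly the same thing: expand via (\ref{eq:pr1}) and match against the four-point recurrence~(\ref{eq:recurp}) shifted to $\alpha+1$, and iterate (\ref{eq:pr1}) by induction, respectively.

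For (\ref{eq:pr1}) the routes differ slightly. The paper first inverts the Laguerre expansion~(\ref{eq:pklk}) via the binomial transform to obtain an expression for $L_n^{(\alpha)}(cx)$ as a linear combination of the $\widetilde P_j^{(\alpha,c)}$, substitutes this into the contiguous relation $L_n^{(\alpha)}-L_n^{(\alpha+1)}+L_{n-1}^{(\alpha+1)}=0$, and then argues that since the resulting finite sum vanishes for \emph{every} $n$, each summand must vanish---a triangular/induction argument in disguise. Your approach bypasses the inversion entirely: you substitute $L_k^{(\alpha)}=L_k^{(\alpha+1)}-L_{k-1}^{(\alpha+1)}$ directly into~(\ref{eq:pklk}), shift the index, and read off the two pieces using $(n-k)_{k+1}=n\,(n-k)_k$. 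This is shorter and more transparent; the paper's detour through binomial inversion buys nothing here, though the inverse relation it establishes may be of independent interest. Your remark that the $m$ in (\ref{eq:pr1}) is a typo for $n$ is correct.
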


\begin{proof}
We notice that~\label{eq:lkpk}
\begin{equation}
L_n^{(\alpha)}(c x)=(-1)^n\sum _{j=0}^n \frac{c^{2 j-n} }{\Gamma (j+1) \Gamma (n-j+1)}\widetilde{P}_j^{(\alpha,c)}(x),
\end{equation}
by fitting (\ref{eq:pklk}) in the binomial transform
\begin{equation}
A_n = \sum_{j=0}^n \binom{n}{j} B_j
\end{equation}
before applying the inversion formula
\begin{equation}
B_n = \sum_{k=0}^n (-1)^{n-k} \binom{n}{k} A_k.
\end{equation}    
Inserting (\ref{eq:lkpk}) into the 3-point-rule of Laguerre polynomials
\begin{equation}\label{eq:lk3}
L_n^{(\alpha)}(cx)-L_n^{(\alpha+1)}(cx)+L_{n-1}^{(\alpha+1)}(cx)=0,    
\end{equation}
the resulting summations can be combined into one as
\begin{equation}
    \sum_{j=0}^n \frac{c^{2 j-n-1}}{\Gamma (j+1) \Gamma (n-j+1)}\left(j\widetilde{P}_{j-1}^{(\alpha +1,c)}(x)+c \widetilde{P}_{j}^{(\alpha +1,c)}(x)-c \widetilde{P}_{j}^{(\alpha,c)}(x)\right)=0,
\end{equation}
which is valid for any $n$. Hence the relation (\ref{eq:pr1}) holds. 

By (\ref{eq:pr1}), one can have
\begin{eqnarray}
&&\widetilde{P}_{n+1}^{(\alpha-1,c)}(x)+\frac{c (\alpha +n)+1}{c^2}\widetilde{P}_{n}^{(\alpha-1,c)}(x)+\frac{m}{c^3}\widetilde{P}_{n}^{(\alpha-1,c)}(x)\nonumber\\
&=&    \widetilde{P}^{(\alpha,c)}_{n+1}(x)+\frac{c (2 n+\alpha+1)+1 }{c^2}\widetilde{P}^{(\alpha,c)}_n(x)+\frac{n (c (\alpha +n)+2)}{c^3} \widetilde{P}^{(\alpha,c)}_{n-1}(x)\nonumber\\
&&+\frac{n (n-1) }{c^4}\widetilde{P}^{(\alpha,c)}_{n-2}(x).
\end{eqnarray}
Inserting (\ref{eq:recurp}) into the above relation before setting $\alpha\to\alpha+1$, the result (\ref{eq:pr2}) is established. 

The equality~(\ref{eq:pr3}) is also a consequence of~(\ref{eq:pr1}) and can be proved by induction. 
\end{proof}

\begin{proposition}\label{propstruP}
The structure relation of the type II polynomial $P_n$ is given by
    \begin{equation}\label{eq:structP}
      x\frac{\dd}{\dd x}P_n(x)=  n P_n(x)+\frac{n (c (\alpha +n)+1) }{\beta  c}P_{n-1}(x)+\frac{(n-1) n}{\beta ^2 c}P_{n-2}(x).
    \end{equation}
\end{proposition}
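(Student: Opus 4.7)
The plan is to mirror the derivation of the structure relation for $Q_n$ carried out in Proposition~\ref{propQ}: first establish the analogous relation at the level of the unscaled polynomials $\widetilde{P}_n^{(\alpha,c)}$, and then transfer it to $P_n$ via the rescaling~(\ref{eq:r2}). The two ingredients needed are the differential identity~(\ref{eq:dfphat1}), which changes the parameter from $\alpha$ to $\alpha+1$ under differentiation, and the four-point relation~(\ref{eq:pr2}) of Lemma~\ref{corop}, which allows one to multiply by $x$ and return to the original parameter $\alpha$.

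Concretely, I would first differentiate $\widetilde{P}_n^{(\alpha,c)}(x)$ using~(\ref{eq:dfphat1}) to re-express $\tfrac{\dd}{\dd x}\widetilde{P}_n^{(\alpha,c)}(x)$ in terms of a type~II polynomial at the shifted parameter $\alpha+1$. Multiplying both sides by $x$ and invoking~(\ref{eq:pr2}) with $n$ replaced by $n-1$, the quantity $x\,\widetilde{P}_{n-1}^{(\alpha+1,c)}(x)$ expands into a linear combination of $\widetilde{P}_n^{(\alpha,c)}(x)$, $\widetilde{P}_{n-1}^{(\alpha,c)}(x)$, and $\widetilde{P}_{n-2}^{(\alpha,c)}(x)$. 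The outcome is a three-term structure relation for $\widetilde{P}_n^{(\alpha,c)}$ involving only the original parameter $\alpha$, with coefficients of the form $1/c^0$, $(c(n+\alpha)+1)/c^2$, and $(n-1)/c^3$.

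The final step is the change of variables in~(\ref{eq:r2}). Writing $y=\beta x/c$, the operator $x\tfrac{\dd}{\dd x}$ is preserved up to the chain-rule factor $\beta/c$, and each instance of $\widetilde{P}_k^{(\alpha,c)}(y)$ carries a factor $(\beta/c)^k$. After pulling out the overall normalization $(\beta/c)^{-n}$ defining $P_n(x)$, the coefficients $1/c^2$ and $1/c^3$ at the $\widetilde{P}$-level are converted to $1/(\beta c)$ and $1/(\beta^2 c)$ at the $P$-level, matching~(\ref{eq:structP}) term by term.

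The main obstacle is purely bookkeeping: the differential identity, the recurrence~(\ref{eq:pr2}), and the rescaling each introduce powers of $\beta$ and $c$ that must be tracked consistently, and a misplaced factor would spoil the explicit coefficients. A useful sanity check along the way is the leading-order balance: since $P_n$ is monic of degree $n$, applying $x\tfrac{\dd}{\dd x}$ must produce $n\,P_n(x)$ at leading order, which pins down the normalization of the differential step before the lower-order terms are reconciled.
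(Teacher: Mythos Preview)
Your approach is exactly the paper's: differentiate via~(\ref{eq:dfphat1}), multiply by $x$, apply~(\ref{eq:pr2}) with $n\to n-1$ to obtain the structure relation~(\ref{eq:strucp}) for $\widetilde{P}_n^{(\alpha,c)}$, and then rescale by~(\ref{eq:r2}). Two minor bookkeeping corrections your own sanity check would catch: under $y=\beta x/c$ the Euler operator $x\,\dd/\dd x$ is exactly invariant (no extra $\beta/c$ factor), and since $\widetilde{P}_n^{(\alpha,c)}$ is monic the differential identity must carry an overall factor $n$, i.e.\ $\tfrac{\dd}{\dd x}\widetilde{P}_n^{(\alpha,c)} = n\,\widetilde{P}_{n-1}^{(\alpha+1,c)}$, which supplies the missing $n$ in each of your intermediate coefficients.
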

\begin{proof}
The result~(\ref{eq:pr2}) allows us to replace the polynomials of parameter $\alpha+1$ with those of $\alpha$ in (\ref{eq:dfphat1}), which  yields the structure relation for $\widetilde{P}_n^{(\alpha,c)}$ as
\begin{equation}\label{eq:strucp}
x \frac{\dd}{\dd x} \widetilde{P}_n^{(\alpha,c)}(x)= \frac{n (n-1)}{c^3}\widetilde{P}_{n-2}^{(\alpha,c)}(x)+\frac{n (c (\alpha +n)+1)}{c^2}\widetilde{P}_{n-1}^{(\alpha,c)}(x)+n\widetilde{P}_{n}^{(\alpha,c)}(x).
\end{equation}
Together with (\ref{eq:r2}), the structure relation~(\ref{eq:structP}) of $P_n$ is established.
\end{proof}

It is worth noting that the established structure relation~(\ref{eq:strucp}) along with the known recurrence relation~(\ref{eq:recurP}) provide an alternative derivation to the four-term differential equation given in~\cite{CV03} for $y = \widetilde{P}_n^{(\alpha,c)}(x)$,
\begin{equation}
-c^2 m y+\left(c^2 x+c (-\alpha +m-2)-1\right) y^\prime +(\alpha -2 c x+2) y^{\prime\prime}+x y^{\prime\prime\prime}=0, 
\end{equation}
where we omit the details.

\subsubsection*{- Christoffel-Darboux formula and derivatives of correlation kernels}
\begin{proposition}\label{prop:cd}
A Christoffel-Darboux formula of correlation kernel~(\ref{eq:K-MOP}) is given by
\begin{eqnarray}\label{eq:cd}
      (x-y) K(x,y)&=&\frac{m (m+1) }{\beta ^3 c}Q_{m+1}(x)P_{m-1}(y)+\frac{m(m-1)  }{\beta ^3 c}Q_m(x)P_{m-2}(y)\nonumber\\
      && +\frac{m (c (\alpha +m)+2) }{\beta ^2 c}Q_m(x)P_{m-1}(y)-Q_{m-1}(x)P_m(y),
\end{eqnarray}
and in the confluent limit $y\to x$, we have
\begin{eqnarray}\label{eq:cdc}
x K(x,x)&=&\frac{n (n+1)}{\beta ^2 c}Q_{n+1}(x)P_n(x)+\frac{n \left(n^2-1\right)}{\beta ^4 c^2} Q_{n+1}(x)P_{n-2}(x) \nonumber\\
&&+\frac{n (c (\alpha +n)+1)}{\beta  c}Q_{n}(x)P_{n}(x)-\frac{n (c (\alpha +n-\beta  x)+1)}{\beta ^2 c^2}Q_{n}(x)P_{n-1}(x)\nonumber\\
&& +\frac{c n (\alpha +n)+n}{\beta  c}Q_{n-1}(x)P_{n-1}(x)+\!\left(\alpha +\frac{1}{c}+2 n-\beta  x\right)\!Q_{n-1}(x)P_{n}(x)\nonumber\\
&&+\frac{(n-1) n}{\beta ^2 c}Q_{n-1}(x)P_{n-2}(x).\label{eq:cd1}
\end{eqnarray}
\end{proposition}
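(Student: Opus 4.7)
The plan is to establish the bi-variate Christoffel--Darboux identity~(\ref{eq:cd}) by a telescoping-sum argument and then pass to the confluent diagonal~(\ref{eq:cdc}) via L'H\^opital's rule combined with the structure relation~(\ref{eq:structP}) for $P_n$.

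For~(\ref{eq:cd}), I would start from
\begin{equation*}
(x-y)K(x,y)=\sum_{j=0}^{n-1}\left[xQ_j(x)P_j(y)-Q_j(x)\,yP_j(y)\right]
\end{equation*}
and substitute the four-term recurrence~(\ref{eq:recurQ}) for $xQ_j(x)$ and~(\ref{eq:recurP}) for $yP_j(y)$. The diagonal contributions proportional to $(c(\alpha+2j+1)+1)/(\beta c)\,Q_j(x)P_j(y)$ cancel between the two sides. The three remaining index-shifted pairs each telescope after a suitable reindexing: the unweighted pair $(Q_{j-1}(x)P_j(y),\,Q_j(x)P_{j+1}(y))$ leaves only the boundary term $-Q_{n-1}(x)P_n(y)$; the pair weighted by $1/(\beta^2 c)$ leaves $\frac{n(c(\alpha+n)+2)}{\beta^2 c}Q_n(x)P_{n-1}(y)$; and the pair weighted by $1/(\beta^3 c)$ leaves $\frac{n(n-1)}{\beta^3 c}Q_n(x)P_{n-2}(y)+\frac{n(n+1)}{\beta^3 c}Q_{n+1}(x)P_{n-1}(y)$. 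These four surviving boundary terms reproduce the right-hand side of~(\ref{eq:cd}).

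For the confluent form~(\ref{eq:cdc}), denote the right side of~(\ref{eq:cd}) by $F(x,y)$. Since $F(x,x)=0$, L'H\^opital's rule yields $K(x,x)=-\partial_y F(x,y)|_{y=x}$. Multiplying through by $x$ converts the four resulting derivative terms $xP_j'(x)$, for $j\in\{n,n-1,n-2\}$, into linear combinations of $P_n,P_{n-1},\ldots,P_{n-4}$ by means of the structure relation~(\ref{eq:structP}). The $P_{n-3}$ and $P_{n-4}$ contributions produced at this step do not appear in~(\ref{eq:cdc}); they are removed by solving the recurrence~(\ref{eq:recurP}) for $P_{n-3}$ in terms of $\{xP_{n-1},P_n,P_{n-1},P_{n-2}\}$ and for $P_{n-4}$ in terms of $\{xP_{n-2},P_{n-1},P_{n-2},P_{n-3}\}$, iterating the former substitution once more to close the system. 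The factors of $x$ produced by these replacements account for the explicit $\beta x$ dependence in the coefficients of $Q_n(x)P_{n-1}(x)$ and $Q_{n-1}(x)P_n(x)$ in~(\ref{eq:cdc}).

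The main obstacle is the bookkeeping: the substitutions generate a substantial number of terms whose coefficients differ only by subleading $O(1/(\beta c))$ contributions, and the cancellations producing the compact form of~(\ref{eq:cdc}) depend sensitively on these coefficients combining correctly. As a consistency check, the coefficient of $Q_{n+1}(x)P_{n-2}(x)$ assembles from a direct contribution $-\tfrac{n(n+1)(n-1)(c(\alpha+n-1)+1)}{\beta^4 c^2}$ coming from~(\ref{eq:structP}) applied to $xP_{n-1}'(x)$, together with an indirect contribution $+\tfrac{n(n+1)(n-1)(c(\alpha+n-1)+2)}{\beta^4 c^2}$ from the $P_{n-3}$ substitution; their sum collapses to the claimed $\tfrac{n(n^2-1)}{\beta^4 c^2}$. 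The remaining coefficients in~(\ref{eq:cdc}) should be verified by the same direct-plus-indirect matching.
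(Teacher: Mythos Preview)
Your telescoping argument for~(\ref{eq:cd}) is correct and is the same idea as the paper's proof, just carried out directly with the explicit recurrence coefficients rather than via the abstract coefficients $a_{j,k}=\int xP_kQ_j$; the boundary terms you identify are exactly the four terms on the right of~(\ref{eq:cd}).

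The confluent argument, however, has a real gap. Your substitutions for $P_{n-3}$ and $P_{n-4}$ do not close on the $Q_{n+1}$-row: the term $-\tfrac{n(n+1)}{\beta^3c}\,Q_{n+1}\cdot xP_{n-1}'$ generates, after the $P_{n-3}$ replacement, both a surviving $Q_{n+1}P_{n-1}$ piece with coefficient $\tfrac{n(n+1)(c(\alpha+n)+1)}{\beta^3c^2}$ and an $x$-dependent piece $-\tfrac{n(n+1)}{\beta^2c}\,xQ_{n+1}P_{n-1}$. Neither appears in~(\ref{eq:cdc}), and your procedure offers no mechanism to remove them (re-expanding $xP_{n-1}$ is circular). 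Symmetrically, the $Q_{n-1}$-row in your scheme comes only from $Q_{n-1}\cdot xP_n'$ and contains no $P_{n-3}$ or $P_{n-4}$, so it yields the constant coefficient $n$ for $Q_{n-1}P_n$; it cannot produce the $x$-dependent coefficient $\alpha+\tfrac1c+2n-\beta x$ of~(\ref{eq:cdc}). Your consistency check on $Q_{n+1}P_{n-2}$ happens to succeed, but checking $Q_{n+1}P_{n-1}$ or $Q_{n-1}P_n$ would have exposed the problem.

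The missing ingredient is information from the $Q$-side. One clean fix is to add a suitable ($x$-dependent) multiple of the diagonal identity $F(x,x)=0$, i.e.\ $\tfrac{n(n+1)}{\beta^3c}Q_{n+1}P_{n-1}+\tfrac{n(n-1)}{\beta^3c}Q_nP_{n-2}+\tfrac{n(c(\alpha+n)+2)}{\beta^2c}Q_nP_{n-1}=Q_{n-1}P_n$, to trade the unwanted $Q_{n+1}P_{n-1}$ pieces for contributions to the $Q_{n-1}$- and $Q_n$-rows; this is precisely what generates the $-\beta x$ in the $Q_{n-1}P_n$ coefficient. The paper accomplishes the same effect by invoking \emph{both} structure relations~(\ref{eq:structQ}),~(\ref{eq:structP}) and \emph{both} recurrences~(\ref{eq:recurQ}),~(\ref{eq:recurP}) in the simplification, rather than only the $P$-side relations you use.
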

Note that a general Christoffel-Darboux formula for multiple orthogonal polynomials, formulated in terms of multi-indices, is given in~\cite{DK04}. The formula~(\ref{eq:cd}) is proved in the same spirit as~\cite{DK04} yet may not directly follow from the general formula. Furthermore, the special case~(\ref{eq:cdc}) would not be accessible without the structure relations of these polynomials.  

\begin{proof}
Since $P_k(x)$ is a polynomial of degree $k$, we can expand $x P_k(x)$ as 
\begin{equation}\label{eq:lspp}
xP_k(x)=\sum_{j=0}^{k+1}a_{j,k}P_j(x).
\end{equation}
By multiplying both sides of (\ref{eq:lspp}) with $Q_j(x)$ and integrating over the real line, due to orthogonality~(\ref{eq:orth}), the surviving term gives
\begin{equation}\label{eq:coc}
    a_{j,k}=\int_0^{\infty}x P_k(x)Q_j(x)\dd x.
\end{equation}
On the other hand, $x Q_k(x)$ can be written as the linear combination 
\begin{equation}\label{eq:lspq}
x Q_j(x)=\sum_{k=0}^{n+1}a_{j,k}Q_k(x), \qquad j=0,\dots ,n-1.
\end{equation}
Note that the coefficients $a_{j,k}$ are $0$ if $j\geq k+2$.
Using the expansions (\ref{eq:lspp}) and (\ref{eq:lspq}) we can write
\begin{eqnarray}
   && (y-x)\sum_{k=0}^{n-1}Q_k(x)P_k(y)\label{eq:sumhatpq}
    \\&=&\sum_{k=0}^{n-1}\sum_{j=0}^{k+1}a_{j,k}Q_k(x)P_j(y)-\sum_{k=0}^{n-1}\sum_{j=0}^{n+1}a_{k,j}Q_j(x)P_k(y)\\
    &=&\sum_{j=0}^{n}\sum_{k=j-1}^{n-1}a_{j,k}Q_k(x)P_j(y)-\sum_{j=0}^{n-1}\sum_{k=0}^{n+1}a_{j,k}Q_k(x)P_j(y)\\
    &=&Q_{n-1}(x)P_n(y)-\sum_{j=0}^{n-1}\sum_{k=n}^{n+1}a_{j,k}Q_k(x)P_j(y)\\
 &=&Q_{n-1}(x)P_n(y)-Q_{n}(x)\sum_{j=0}^{n-1}a_{j,n}P_j(y)-Q_{n+1}(x)\sum_{j=0}^{n-1}a_{j,n+1}P_j(y)\\
 &=&Q_{n-1}(x)P_n(y)-Q_{n}(x)\left(yP_n(y)-a_{n,n}P_n(y)-a_{n+1,n}P_{n+1}(y)\right)\nonumber\\
 &&-Q_{n+1}(x)\left(yP_{n+1}(y)-a_{n,n+1}P_n(y)-a_{n+1,n+1}P_{n+1}(y)\right.\nonumber\\
 &&\left.-a_{n+2,n+1}P_{n+2}(y)\right).\label{eq:sumhatpq1}
\end{eqnarray}
Based on orthogonality property~(\ref{eq:orth}), the needed coefficients in (\ref{eq:sumhatpq1}) correspond to the coefficients in the recurrence relation~(\ref{eq:recurP}). We have
\begin{eqnarray}
&&a_{n+1,n}=1\\
&&a_{n,n}=\frac{c (\alpha +2 n+1)+1}{\beta  c}\\
&&a_{n-1,n}=\frac{n (c (\alpha +n)+2)}{\beta ^2 c}\\
&&a_{n-2,n}=\frac{(n-1) n}{\beta ^3 c}.
\end{eqnarray}
In~(\ref{eq:sumhatpq1}), inserting the recurrence relation~(\ref{eq:recurP}) simplifies the expression and yields the desired result~(\ref{eq:cd}).

As a consequence of~(\ref{eq:cd}), we can write for $x\neq y$,
\begin{equation}
    K(x,y)=\frac{f(x,y)}{g(x,y)},
\end{equation}
where $f(x,y)$ takes the form~(\ref{eq:fp}) and $g(x,y)=x-y$. The limit
\begin{equation}
    K(x,x)=\lim_{y\to x}\frac{f(x,y)}{g(x,y)}
\end{equation}
is evaluated using L'Hôpital's rule, where the required derivatives are obtained from the structure relations~(\ref{eq:structQ}) and~(\ref{eq:structP}). The claimed result~(\ref{eq:cd1}) is deduced after simplifying the resulting expressions using the recurrence relations~(\ref{eq:recurP}) and~(\ref{eq:recurQ}). This completes the proof.
\end{proof}

Proposition~\ref{prop:cdd} states the kernel derivative identity needed in Section \ref{sec:rr} to perform the integration by parts that reveals the relations among spectral moments. We define the operator
\begin{equation}\label{eq:op}
    B_{x,y}=1+x\frac{\dd}{\dd x}+y\frac{\dd}{\dd y}.
\end{equation}

\begin{proposition}\label{prop:cdd}
The correlation kernel~(\ref{eq:K-MOP}) satisfies the derivative formula
\begin{equation}
    B_{x,y}K(x,y)=-\beta  (x-y) K(x,y)+\frac{n }{\beta  c}Q_n(x)P_{n-1}(y) -\beta Q_{n-1}(x) P_n(y).\label{eq:cdd}
\end{equation}
\end{proposition}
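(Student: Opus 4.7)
The plan is to apply $B_{x,y}$ directly to the finite sum $K(x,y)=\sum_{j=0}^{n-1}Q_j(x)P_j(y)$ and use the structure relations of Proposition~\ref{propQ} and Proposition~\ref{propstruP} to rewrite $xQ_j'(x)$ and $yP_j'(y)$ purely in terms of the polynomials themselves. After substitution, the proof should reduce to a bookkeeping argument followed by an application of the Christoffel-Darboux formula~(\ref{eq:cd}).

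Concretely, I would proceed as follows. First, expand
\begin{equation*}
B_{x,y}K(x,y)=\sum_{j=0}^{n-1}\Bigl[Q_j(x)P_j(y)+xQ_j'(x)P_j(y)+yQ_j(x)P_j'(y)\Bigr]
\end{equation*}
and insert relation~(\ref{eq:structQ}) with $n\mapsto j$ to rewrite $xQ_j'(x)$ as a linear combination of $Q_j$, $Q_{j+1}$, $Q_{j+2}$, and relation~(\ref{eq:structP}) with $n\mapsto j$ to rewrite $yP_j'(y)$ as a linear combination of $P_j$, $P_{j-1}$, $P_{j-2}$. Next, collect the diagonal terms $Q_j(x)P_j(y)$: their coefficient is $1-(j+1)+j=0$, so they vanish.

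The remaining off-diagonal pieces fall into two pairs of index-shifted sums. I would reindex the sum involving $Q_{j+2}(x)P_j(y)$ with $k=j+2$ and compare to the sum involving $Q_j(x)P_{j-2}(y)$; after cancellation, only the boundary contributions at $k=n$ and $k=n+1$ survive, namely $-\frac{(n-1)n}{\beta^2 c}Q_n(x)P_{n-2}(y)-\frac{n(n+1)}{\beta^2 c}Q_{n+1}(x)P_{n-1}(y)$. A similar telescoping applied to the $Q_{j+1}P_j$ and $Q_jP_{j-1}$ sums leaves the boundary term $-\frac{n(c(\alpha+n)+1)}{\beta c}Q_n(x)P_{n-1}(y)$. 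Hence
\begin{equation*}
B_{x,y}K(x,y)=-\frac{n(n+1)}{\beta^2 c}Q_{n+1}(x)P_{n-1}(y)-\frac{n(n-1)}{\beta^2 c}Q_n(x)P_{n-2}(y)-\frac{n(c(\alpha+n)+1)}{\beta c}Q_n(x)P_{n-1}(y).
\end{equation*}

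Finally, I would compare this expression to the target right-hand side of~(\ref{eq:cdd}) by substituting the Christoffel-Darboux identity~(\ref{eq:cd}) into $-\beta(x-y)K(x,y)$. The $Q_{n-1}(x)P_n(y)$ contributions from~(\ref{eq:cd}) cancel the explicit $-\beta Q_{n-1}(x)P_n(y)$ term, while the coefficient of $Q_n(x)P_{n-1}(y)$ combines with $\frac{n}{\beta c}Q_n(x)P_{n-1}(y)$ via $-\frac{n(c(\alpha+n)+2)}{\beta c}+\frac{n}{\beta c}=-\frac{n(c(\alpha+n)+1)}{\beta c}$, producing exactly the collected boundary terms. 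The main obstacle I foresee is purely algebraic: keeping the telescoping bookkeeping straight across four index-shifted sums and verifying that the boundary coefficients at $j=n-1,n,n+1$ recombine precisely with the CD coefficients. Everything else is mechanical once the structure relations~(\ref{eq:structQ}) and~(\ref{eq:structP}) are in hand.
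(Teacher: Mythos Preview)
Your argument is correct and complete: the termwise differentiation of the finite sum, followed by the two telescoping cancellations, indeed yields
\[
B_{x,y}K(x,y)=-\frac{n(n+1)}{\beta^2 c}Q_{n+1}(x)P_{n-1}(y)-\frac{n(n-1)}{\beta^2 c}Q_n(x)P_{n-2}(y)-\frac{n(c(\alpha+n)+1)}{\beta c}Q_n(x)P_{n-1}(y),
\]
and your final comparison with $-\beta(x-y)K(x,y)+\frac{n}{\beta c}Q_nP_{n-1}-\beta Q_{n-1}P_n$ via~(\ref{eq:cd}) is accurate line by line.

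However, your route differs from the paper's. The paper does not differentiate the $n$-term sum and telescope. Instead it introduces $f(x,y)=(x-y)K(x,y)$, uses the Christoffel--Darboux formula~(\ref{eq:cd}) at the outset to write $f$ as a fixed four-term expression, applies the structure relations~(\ref{eq:structQ}),~(\ref{eq:structP}) only to those four terms, and then simplifies with the recurrences~(\ref{eq:recurQ}),~(\ref{eq:recurP}) to reach~(\ref{eq:cdd}). In other words, the paper front-loads the CD identity so that the differentiation step is finite and no telescoping bookkeeping across $j=0,\dots,n-1$ is ever needed. Your approach is more elementary and self-contained---it rederives the boundary structure of the CD formula through the telescoping itself---at the cost of tracking four index-shifted sums. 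The paper's approach is shorter once CD is in hand but requires an extra simplification pass with the four-term recurrences; yours needs only the structure relations and a single invocation of CD at the very end.
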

In the special case $y\to x$, we have 
\begin{equation}
x\frac{\dd}{\dd x}K(x,x)+K(x,x)=\frac{n }{\beta c}Q_n(x)P_{n-1}(x) -\beta Q_{n-1}(x) P_n(x).\label{eq:cdd1}
\end{equation}
\begin{proof}
For convenience, we denote 
\begin{equation}\label{eq:f}
f(x,y)=(x-y)K(x,y),
\end{equation}
and it can be seen that
\begin{eqnarray}
x \frac{\dd}{\dd x}K(x,y)+y\frac{\dd}{\dd y}K(x,y)=-K(x,y)+x\frac{\dd}{\dd x}f(x,y)+y\frac{\dd}{\dd y}f(x,y).\label{eq:dkf}
\end{eqnarray}
By the Christoffel-Darboux formula~(\ref{eq:cd}) of correlation kernel, we have 
\begin{eqnarray}\label{eq:fp}
f(x,y)&=&\frac{n (n+1) }{\beta ^3 c}Q_{n+1}(x)P_{n-1}(y)+\frac{n(n-1)  }{\beta ^3 c}Q_n(x)P_{n-2}(y)\nonumber\\
      && +\frac{n (c (\alpha +n)+2) }{\beta ^2 c}Q_n(x)P_{n-1}(y)-Q_{n-1}(x)P_n(y).
\end{eqnarray}
The derivatives of $f(x,y)$ in~(\ref{eq:dkf}) are determined by the structure relations of $Q_n$ and $P_n$ given in Proposition~\ref{propQ} and Proposition~\ref{propstruP},~respectively. The resulting expression is further simplified using the four-point recurrence relations (\ref{eq:recurQ}) and (\ref{eq:recurP}), yielding the compact form~(\ref{eq:cdd}). The special case (\ref{eq:cdd1}) follows directly with $y\to x$. We have completed the proof of Proposition~\ref{prop:cdd}. 
\end{proof}

\subsection{Recurrence relations of spectral moments}\label{sec:rr}

In this section, we derive recurrence relations of spectral moments as defined in (\ref{eq:mk}) by
\begin{equation}
m_k=\mathbb{E}[R_k].
\end{equation}
Note that we work with real $k$ instead of integers such that the spectral moment naturally leads to linear spectral statistics involving logarithmic terms by differentiating with respect to $k$. In our setting, the special case $k=1$ yields the mean value of entanglement entropy as will be shown in the next section, and the general case of positive integer $k$ will serve as the initial data for computing the higher-order cumulants, cf.~\cite{HW25}.

We now discuss the two recurrence relations we obtained for the spectral moments of non-intersecting squared Bessel process, respectively given in Proposition~\ref{prop:r1} and Proposition~\ref{prop:r2}. Proposition~\ref{prop:r1} provides a more compact form that replaces a rather longer six-term recurrence relation purely in $k$ by incorporating derivatives with respect to parameter $T$.

\begin{proposition}\label{prop:r1}
We have 
\begin{eqnarray}
b_3  (k+4) (2 k+7)m_{k+3}&=&b_2  m_{k+2}+b_1  m_{k+1}+ b_0  k (k+1-\alpha) (k+1+\alpha) m_k
\nonumber\\
&&+~\!b_4  \frac{\dd }{\dd T}m_{k+2}+\frac{b_5 }{k+1}\frac{\dd}{\dd T} m_{k+1}. \label{eq:propr1}
\end{eqnarray} 

\end{proposition}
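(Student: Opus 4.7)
The plan is to start from the determinantal representation $m_k=\int_0^\infty x^k K(x,x)\,\dd x$ and combine it with the confluent Christoffel--Darboux formula~(\ref{eq:cd1}) for $xK(x,x)$. Integrating that identity against $x^k$ rewrites $m_{k+1}$ as a linear combination of the auxiliary integrals
$$J^{k}_{i,j}\;=\;\int_0^\infty x^k\,Q_{n+i}(x)\,P_{n+j}(x)\,\dd x,\qquad (i,j)\in\{-2,-1,0,1\}^2,$$
together with a single contribution proportional to $m_{k+2}$ coming from the $-\beta x$ factor present in~(\ref{eq:cd1}). This identifies the basic objects in play and will be the starting line of the derivation.

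The second ingredient is integration by parts. Pairing $x^k$ against the derivative identity~(\ref{eq:cdd1}) and using $\int_0^\infty x^{k+1}\tfrac{\dd}{\dd x}K(x,x)\,\dd x = -(k+1)m_k$ gives the clean relation $-k\,m_k=\tfrac{n}{\beta c}J^{k}_{0,-1}-\beta\,J^{k}_{-1,0}$. More generally, integrating $x^k\tfrac{\dd}{\dd x}\!\left(Q_{n+i}P_{n+j}\right)$ by parts and then applying the structure relations~(\ref{eq:structQ}) and~(\ref{eq:structP}) converts polynomial derivatives into linear combinations of $J^{k}_{i',j'}$, while the factor $\e^{-\beta x}$ implicit in $Q$ contributes a further $-\beta x$ that shifts $k\mapsto k+1$. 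The four-point recurrences~(\ref{eq:recurP}) and~(\ref{eq:recurQ}) relate $J^{k+1}_{i,j}$ to $J^{k}_{i',j'}$ with shifted indices by absorbing the extra factor of $x$ into either the $Q$ or the $P$ side. Iterating produces a finite, over-determined linear system among the $J^{k}_{i,j}$ at neighbouring values of $k$ and the moments $m_k,m_{k+1},m_{k+2},m_{k+3}$.

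The role of the $T$-derivatives is to close this system. Since $T$ enters the weights~(\ref{eq:we1})--(\ref{eq:we2}) only through $\beta=T/(2t(T-t))$ in the exponential $\e^{-\beta x}$ (the Bessel arguments depend on $t$ and $a$ but not $T$), the operator $\tfrac{\dd}{\dd T}$ acting on $m_\ell$ or on an integral $J^{k}_{i,j}$ effectively inserts a factor $-(\partial_T\beta)\,x$ inside the integrand, modulo contributions produced by the $T$-dependence of the polynomials through $c=4t^2\beta/a$. The resulting identities express particular linear combinations of $J^{k+1}_{i,j}$ in terms of $\tfrac{\dd}{\dd T}m_{k+1}$ and $\tfrac{\dd}{\dd T}m_{k+2}$, which is exactly what permits the otherwise six-term purely-in-$k$ recurrence to be contracted into the compact four-term form~(\ref{eq:propr1}) plus two derivative terms.

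The final step is linear algebra: solve the system obtained above for $m_{k+3}$, writing every residual auxiliary integral either as a lower moment or as a $T$-derivative of a moment. The main obstacle will be the bookkeeping. There are a large number of $J^{k}_{i,j}$ with several admissible $k$-shifts, and substantial cancellations driven by the precise coefficients in the four-point recurrences and in~(\ref{eq:cd1}) are required to produce the clean prefactor $b_3(k+4)(2k+7)$ in front of $m_{k+3}$ and, most notably, the factorisation $k(k+1-\alpha)(k+1+\alpha)$ multiplying $m_k$; matching the latter identifies the structural identity responsible for the recurrence closing at three steps rather than more.
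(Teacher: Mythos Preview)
Your overall architecture matches the paper's: write $m_k=\int_0^\infty x^k K(x,x)\,\dd x$, pair it with~(\ref{eq:cdd1}) to obtain $k\,m_k=\beta J^{k}_{-1,0}-\tfrac{n}{\beta c}J^{k}_{0,-1}$ (their~(\ref{eq:tk1})), then feed the structure relations~(\ref{eq:structQ}),~(\ref{eq:structP}) and the four-point recurrences~(\ref{eq:recurP}),~(\ref{eq:recurQ}) through repeated integration by parts to produce a linear system among the $J^{k}_{i,j}$ (their~(\ref{eq:app31})--(\ref{eq:appa38})), and eliminate. Where your plan diverges, and where the gap lies, is the $T$-derivative step. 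You propose to differentiate $K(x,x)=\sum_j Q_jP_j$ directly in $T$, noting that the weights contribute $-(\partial_T\beta)x$ ``modulo contributions produced by the $T$-dependence of the polynomials through $c$''. Those contributions are not a side issue: both $\beta$ and $c=4t^2\beta/a$ depend on $T$, every coefficient of $Q_j$ and $P_j$ depends on $(\beta,c)$, and there is no structure relation for $\partial_T Q_j$ or $\partial_T P_j$. You have not supplied any mechanism to control these terms, and without one the system does not close in the way you describe.

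The paper avoids differentiating the kernel altogether. It observes that the \emph{joint density}~(\ref{eq:bi})--(\ref{eq:bifg}) depends on $T$ only through the scalar factor $\exp\!\big(-\beta\sum_i x_i\big)$ (and its normalisation), so that $2(t-T)^2\,\tfrac{\dd}{\dd T}m_k=\kappa(R_k,R)$, a covariance. That covariance is the double integral $\int\!\!\int x^k(x-y)K(x,y)K(y,x)\,\dd x\,\dd y$ via the two-point function; applying the \emph{off-diagonal} derivative identity~(\ref{eq:cdd}) together with the Christoffel--Darboux formula~(\ref{eq:cd}) collapses it to a single boundary term, yielding the exact identity $J^{k}_{-1,0}=I^{n-1,n}_k=2(t-T)^2\,\tfrac{\dd}{\dd T}m_k$. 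This is precisely the substitution that replaces the last surviving auxiliary integrals by $T$-derivatives of moments and produces~(\ref{eq:propr1}); no differentiation of polynomial coefficients is ever required. Your plan would become correct if you replaced the direct-differentiation paragraph by this covariance route.
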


\begin{proof}
By using the one-point correlation function~(\ref{eq:gl}), we have
\begin{equation}\label{eq:m}
m_{k}=\int_{0}^{\infty} x^{k} K(x,x)\dd x .
\end{equation}
Performing the derivative in
\begin{equation}\label{eq:itkk}
 \int_{0}^{\infty} \frac{\dd}{\dd x}x^{k+1} K(x,x)\dd x =0
\end{equation}
by the chain rule, we establish
\begin{equation}\label{eq:itk}
(k+1) m_k =\int_{0}^{\infty} x^{k+1} \frac{\dd}{\dd x} K(x,x)\dd x.
\end{equation}
Making use of the derivative formula~(\ref{eq:cdd1}) in (\ref{eq:itk}) above leads to 
\begin{equation}\label{eq:tk1}
k m_k=\beta {I}_{k}^{n-1,n}-\frac{n}{\beta  c}I_{k}^{n,n-1},
\end{equation}
where we adopt the notation
\begin{equation}\label{eq:Ist}
I_{s,t}(k)=\int_{0}^{\infty} x^{k}Q_s(x)P_t(x)\dd x.    
\end{equation}

We need to recycle the integrals in (\ref{eq:tk1}) into the spectral moments in obtaining the recurrence relation. To this end, we carry out the integrations by parts, in the same spirit as (\ref{eq:itkk}), in $I_{k}^{s,t}$, for $s,t=n-1,n,n+1$, except for $I_{k}^{n-1,n+1}$. The resulting equalities are summarized in (\ref{eq:app31})--(\ref{eq:appa38}) in Appendix~\ref{app3}. These equalities form a linear system when each distinct integral $I_{k}^{s,t}$ is regarded as an individual variable.  By eliminating these variables successively, we obtain the following two compatible relations between $I_{k}^{n,n-1}$ and $I_{k}^{n-1,n}$,
\begin{eqnarray}
 &&\!\!b_6 I_{k-2}^{n-1,n}+b_7 I_{k-1}^{n-1,n}+b_8 I_{k}^{n-1,n}+b_9 I_{k+1}^{n-1,n}+b_{10} I_{k-1}^{n,n-1}+b_{11}I_{k}^{n,n-1}=0, ~~~~~~\label{eq:tk2} \\ 
 &&\!\!b_{12} I_{k-1}^{n-1,n}+b_{13}I_{k}^{n-1,n}+b_{14} I_{k+1}^{n-1,n}+b_{15} I_{k-1}^{n,n-1}+b_{16}I_{k}^{n,n-1}+b_{17} I_{k+1}^{n,n-1}=0.~~~~~~~\label{eq:tk3}
\end{eqnarray}

Continuing, we can further eliminate $I^{n,n-1}$ in (\ref{eq:tk2})--(\ref{eq:tk3}) by using (\ref{eq:tk1}). As a result, we arrive at a relation between integrals $I^{n-1,n}_k$ and the spectral moments $m_k$ of several shifted indices $k$. The claimed result (\ref{eq:propr1}) follows after substituting all the remaining integrals $I^{n-1,n}_k$ with $m_k$ by using the identity
\begin{equation}\label{eq:dbI}
I^{n-1,n}_k=  2 (t-T)^2 \frac{\dd}{\dd T}m_k.
\end{equation}

We now show the derivation of (\ref{eq:dbI}) to complete the proof. Denote $\kappa\!\left(R_k,R\right)$ the joint cumulant (i.e., the covariance) between $R_k$ and $R$, substituting the joint density~(\ref{eq:bi}) into the definition of $\kappa(R_k,R)$, we obtain
\begin{equation}\label{eq:dmRkr}
2 (t-T)^2 \frac{\dd}{\dd T}m_k=\kappa\!\left(R_k,R\right).
\end{equation}
On the other hand, $\kappa\!\left(R_k,R\right)$ admits the following integral form by the two-point correlation function~(\ref{eq:gl}),
\begin{equation}\label{eq:rkr}
 \kappa\!\left(R_k,R\right)=\int_0^{\infty}\!\!\int_0^{\infty} x^k(x-y) K(x,y)K(y,x) \dd x \dd y. 
\end{equation}
With the operator $B_{x,y}$ defined in (\ref{eq:op}) that satisfies the skewed self-adjoint property, we perform the partial integration 
\begin{eqnarray}\label{eq:itkkk}
&&\int_0^{\infty}\!\!\int_0^{\infty} K(x,y)K(y,x) B_{x,y}x^k(x-y)\dd x\dd y \nonumber\\
&=& -\int_0^{\infty}\!\!\int_0^{\infty}  x^k(x-y) B_{x,y}K(x,y)K(y,x)\dd x\dd y.  
\end{eqnarray} 
The integrals on the left-hand side of (\ref{eq:itkkk}) can be rewritten as $\kappa\!\left(R_k,R\right)$ via (\ref{eq:rkr}). On the right-hand side, successive applications of Proposition~\eqref{prop:cdd} and Proposition~\eqref{prop:cd} decouple the double integrals into single ones of form~(\ref{eq:Ist}), and many of them vanish due to the biorthogonal property~(\ref{eq:orth}). Comparing the resulting expression with (\ref{eq:appa36}), we arrive at an interesting relation
\begin{equation}\label{eq:RkRI}
  \kappa\!\left(R_k,R\right)= I_{k}^{n-1,n}.
\end{equation}
Finally, the identity (\ref{eq:dbI}) is proved by inserting (\ref{eq:RkRI}) into (\ref{eq:dmRkr}). 
\end{proof}

We note that the three relations (\ref{eq:tk1}), (\ref{eq:tk2}), and (\ref{eq:tk3}) could yield a recurrence for $m_k$ involving six consecutive indices $k,k+1,\dots,k+5$. We omit it here because the resulting expression is too lengthy to display.  

Proposition~\ref{prop:r2} provides a recurrence relation that requires fewer initial conditions when generating explicit expressions for $m_k$, $k=1,2,\dots$.
 We need the definition
\begin{equation}
    m_{k}^{\pm l} = \mathbb{E}[R_k]\rvert_{n\to n\pm l}.
\end{equation}

\begin{proposition}\label{prop:r2}
 \begin{equation}\label{eq:recurr2}
d_1 m_k= \frac{1}{\beta}\!\left(d_4 m^{-1}_{k-1}-d_3 m_{k-1}-d_2 m^{+1}_{k-1}\right)+\frac{k-1}{\beta} \left( d_7 m_{k-2}+d_6 m^{+1}_{k-2}+d_5 m^{+2}_{k-2}\right).
\end{equation}   
\end{proposition}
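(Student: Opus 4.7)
The plan follows the same broad strategy as Proposition~\ref{prop:r1}: start from an integral representation of $m_k$, convert it into a linear combination of integrals $I_j^{a,b}=\int_0^\infty x^j Q_a(x)P_b(x)\,\dd x$, and then reduce all off-diagonal contributions by combining integration by parts with the recurrence relations~(\ref{eq:recurQ})--(\ref{eq:recurP}) and the structure relations~(\ref{eq:structQ})--(\ref{eq:structP}). The distinguishing feature of Proposition~\ref{prop:r2} is that the target is written in shifted moments $m_k^{\pm l}$, which encode diagonal integrals through the telescoping identities
\begin{equation*}
I_k^{n-1,n-1}=m_k-m_k^{-1},\qquad I_k^{n,n}=m_k^{+1}-m_k,\qquad I_k^{n+1,n+1}=m_k^{+2}-m_k^{+1}.
\end{equation*}
Consequently, the reduction must culminate in an expression whose only surviving integrals are diagonal.

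Concretely, I would begin by writing $m_k=\int_0^\infty x^{k-1}\cdot xK(x,x)\,\dd x$ and substituting the confluent Christoffel-Darboux formula~(\ref{eq:cdc}) for $xK(x,x)$. This immediately expresses $m_k$ as a linear combination of $I_{k-1}^{a,b}$ with $(a,b)$ close to $(n-1,n-1)$, together with two level-$k$ integrals $I_k^{n-1,n}$ and $I_k^{n,n-1}$ arising from the $-\beta x$ factors in~(\ref{eq:cdc}). These level-$k$ integrals are eliminated using~(\ref{eq:tk1}), after which $m_k$ is related to a combination of $I_{k-1}^{a,b}$ with both diagonal and off-diagonal index pairs.

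The remaining off-diagonal integrals are then reduced by integration-by-parts identities of the form $\int\frac{\dd}{\dd x}\!\left[x^{k-1}Q_a(x)P_b(x)\right]\dd x=0$. Applying the structure relations~(\ref{eq:structQ})--(\ref{eq:structP}) to rewrite $xQ_a'$ and $xP_b'$ as polynomial combinations produces relations among $I_{k-1}^{a,b}$ at shifted indices while simultaneously generating a boundary term $(k-1)\,I_{k-2}^{a,b}$; it is precisely this boundary factor that yields the $(k-1)$ prefactor on the $m_{k-2}$ block in~(\ref{eq:recurr2}). In parallel, the four-point recurrences~(\ref{eq:recurQ})--(\ref{eq:recurP}) are used to substitute $xQ_a$ or $xP_b$ inside integrals, transferring a power of $x$ into the polynomial factor and relating $I_{k-1}^{a,b}$ to $I_{k-2}^{a',b'}$. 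Together these identities form a closed linear system whose elimination leaves only the diagonal integrals $I_{k-1}^{n-1,n-1}$, $I_{k-1}^{n,n}$, $I_{k-2}^{n,n}$, $I_{k-2}^{n+1,n+1}$ and the moments $m_{k-1}$, $m_{k-2}$, which are finally converted into shifted moments via the telescoping identities above.

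The main obstacle will be controlling the elimination so that only the intended diagonal indices $\{(n-1,n-1),(n,n)\}$ at level $k-1$ and $\{(n,n),(n+1,n+1)\}$ at level $k-2$ survive in the final expression. Any leakage into $I_{k-1}^{n+1,n+1}$ or $I_{k-2}^{n-1,n-1}$ would introduce unwanted $m_{k-1}^{+2}$ or $m_{k-2}^{-1}$ contributions absent from~(\ref{eq:recurr2}), so a careful selection of which integration-by-parts identities and recurrence substitutions to combine is essential, analogous to the elimination strategy that produced (\ref{eq:tk2})--(\ref{eq:tk3}) in the proof of Proposition~\ref{prop:r1}.
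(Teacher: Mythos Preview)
Your proposal is correct and follows essentially the same route as the paper's proof. The paper is extremely terse here: it simply says to eliminate variables in the already-derived linear system~(\ref{eq:app31})--(\ref{eq:appa38}) from the proof of Proposition~\ref{prop:r1}, while invoking the telescoping identity $I_k^{n,n}=m_k^{+1}-m_k$ (and its shifts) to convert diagonal integrals into shifted moments. Your plan spells out the same mechanism in more detail, re-deriving that system via integration by parts and structure/recurrence relations, and additionally proposes entering through the confluent Christoffel--Darboux formula~(\ref{eq:cdc}) rather than directly from the pre-computed identities; this is a harmless variation since~(\ref{eq:cdc}) is built from the same ingredients.
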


\begin{proof}
Similar to the proof of Proposition~\ref{prop:r1}, the recurrence relation~(\ref{eq:recurr2}) is established by appropriately eliminating the variables in~(\ref{eq:app31})--(\ref{eq:appa38}) while keeping in mind the fact 
\begin{eqnarray}
I_{k}^{n,n}=m^{+1}_{k}-m_{k}     
\end{eqnarray}
that follows directly from inserting (\ref{eq:m}) along with the definition of correlation kernel (\ref{eq:orth}) into $m_{k}-m^{-1}_{k}$.
\end{proof}

\section{Dynamical Entanglement Estimation}\label{sec:ee}
In quantum information theory, entanglement of a bipartite pure state are determined
by the eigenvalues of the reduced density matrix of a subsystem~\cite{Mehta,Page93,Foong94,Ruiz95,SZ04,Bortola09,OSZ10,Bortola10,Forrester,Majumdar,Borot12,MajumdarVivo2012,Bortola14,VPO16,FK16,Wei17,SK19,Wei20,Wei20BH,Wei20BHA,LW21,HWC21,Okuyama21,BHK21,BHKRV22,HW22,HW23,HW23C,Wei23C,HW25,WHW25BH}. These eigenvalues are
non-negative and obey the fixed-trace constraint
\begin{equation}\label{eq:l1}
\sum_{i=1}^{n}\lambda_i=1,    
\end{equation}
reflecting the normalization of the density matrix. 

To connect our dynamical eigenvalue model to physically meaningful
entanglement measures that are defined on the simplex of trace-one eigenvalues, we introduce the change of variables
\begin{eqnarray}\label{eq:cvb}
\lambda_i=\frac{x_i}{r},\qquad i=1,\dots,n
\end{eqnarray}
with 
\begin{equation}
    r=\sum_{i=1}^{n}x_i
\end{equation}
so that $0\le\lambda_i\le1$ and (\ref{eq:l1}) is satisfied.
The variable $r$ represents the total trace of the unnormalized spectrum, while
$\boldsymbol\lambda=(\lambda_1,\dots,\lambda_n)$ describes the normalized eigenvalues relevant
for entanglement.

Under this change of variables, the joint density of $(\boldsymbol\lambda,r)$ takes the form
\begin{equation}\label{eq:lamr}
p(\boldsymbol\lambda,r)
=\frac{r^{n-1}}{C}\,
\delta\!\left(1-\sum_{i=1}^n \lambda_i\right)
\det\!\left[f_j(r\lambda_k)\right]_{1\le j,k\le n}
\det\!\left[g_j(r\lambda_k)\right]_{1\le j,k\le n},
\end{equation}
where the factor $r^{n-1}$ arises from the Jacobian of the transformation.
Formally, the corresponding fixed-trace eigenvalue density is obtained by marginalizing over $r$ as
\begin{equation}\label{eq:fbi}
\int_{0}^{\infty} p(\boldsymbol\lambda,r)\,\dd r.
\end{equation}
The two entanglement metrics considered in this work are the quantum purity 
\begin{equation}\label{eq:purity}
P=\sum_{i=1}^n\lambda_i^2,
\end{equation}
and the von Neumann entropy
\begin{equation}\label{eq:von}
S=-\sum_{i=1}^n\lambda_i\ln\lambda_i.
\end{equation}
In the present dynamical model, the joint density~(\ref{eq:lamr}) does not factorize into independent distributions for $r$ and $\boldsymbol\lambda$, preventing the exact moment conversions. To proceed, we replace the random variable $r$ by its mean value $m_1$, where we recall $m_k$ defined in~(\ref{eq:mk}). Under this approximation, entanglement metrics can be expressed in terms of linear statistics of the variables $x_i$. This approximation is motivated by the canonical-microcanonical equivalence principle familiar from statistical mechanics, where a fluctuating extensive constraint is replaced by its typical value when concentration holds~\cite{Ellis1985,Touchette2014}.

Their averages are given by
\begin{equation}\label{eq:approx_obp}
\mathbb{E}[P]
= \frac{m_2}{m_1^2},
\end{equation}
and
\begin{equation}\label{eq:approx_obs}
\mathbb{E}[S]
=\ln m_1-\frac{1}{m_1}\mathbb{E}\!\left[T\right],
\end{equation}
respectively. Therefore, the problem of computing average entanglement metrics now boils down to evaluating $\mathbb{E}[R_2]$ and $\mathbb{E}[T]$ over the non-intersecting squared Bessel process.

\subsection{Computation of average quantum purity}\label{sec:p}

The exact average quantum purity~(\ref{eq:approx_obs}) is given in Proposition~\ref{prop:purity} below.

\begin{proposition}~\label{prop:purity}
The mean value of purity~(\ref{eq:purity}) of the dynamical ensemble~(\ref{eq:fbi}) is 
\begin{equation}
\mathbb{E}[P]=\frac{\alpha ^2 c^2+2 c^2 n^2+3 \alpha  c^2 n+2 \alpha  c+4 c n+1}{n (\alpha  c+c n+1)^2}.
\end{equation}
\end{proposition}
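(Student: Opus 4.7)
Since Proposition~\ref{prop:purity} follows from~(\ref{eq:approx_obp}) once $m_1$ and $m_2$ are known explicitly, the plan is to compute these two spectral moments in closed form and take their ratio. The natural tool is the recurrence~(\ref{eq:recurr2}) of Proposition~\ref{prop:r2}, which determines $m_k$ from $m_{k-1}$ and $m_{k-2}$ with $n$-shifts; the only required initial data are the shifted zeroth moments $m_0^{\pm l}=n\pm l$, which follow immediately from $R_0\equiv n$.

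First, I would specialize~(\ref{eq:recurr2}) at $k=1$. The bracket prefactored by $k-1$ vanishes, leaving
\begin{equation*}
d_1\, m_1=\frac{1}{\beta}\bigl(d_4(n-1)-d_3\, n-d_2(n+1)\bigr),
\end{equation*}
which, after substituting the explicit coefficients $d_1,d_2,d_3,d_4$, produces a closed-form rational expression for $m_1$ in the parameters $(n,\alpha,c,\beta)$. Next, I would set $k=2$:
\begin{equation*}
d_1\, m_2=\frac{1}{\beta}\bigl(d_4\, m_1^{-1}-d_3\, m_1-d_2\, m_1^{+1}\bigr)+\frac{1}{\beta}\bigl(d_7\, n+d_6(n+1)+d_5(n+2)\bigr),
\end{equation*}
where the shifted first moments $m_1^{\pm l}$ are read off from the $k=1$ formula under the substitution $n\mapsto n\pm l$. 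Forming the quotient $m_2/m_1^2$ and simplifying then yields the stated expression.

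The main obstacle is symbolic bookkeeping rather than any conceptual step. Both $m_1$ and $m_2$ emerge as rational functions in $(n,\alpha,c,\beta)$, and one must verify that in the ratio the $\beta$-dependence cancels entirely and that the remaining rational function in $(n,\alpha,c)$ collapses precisely to the numerator and denominator stated in Proposition~\ref{prop:purity}. A useful sanity check is the degenerate limit $a\to 0$ (equivalently $c\to\infty$ at fixed $\beta$, by~(\ref{eq:c})), in which the model reduces to the static Hilbert-Schmidt ensemble and the resulting purity should match the classical Page value, providing a consistency test on the coefficient pattern.
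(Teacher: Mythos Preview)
Your proposal is correct and follows essentially the same route as the paper: specialize the recurrence of Proposition~\ref{prop:r2} at $k=1$ and $k=2$ with the trivial initial data $m_0^{\pm l}=n\pm l$, obtain the explicit $m_1$ and $m_2$, and form $m_2/m_1^2$ via~(\ref{eq:approx_obp}). The paper in fact records the intermediate closed forms $m_1=\dfrac{n(c(\alpha+n)+1)}{\beta c}$ and $m_2=\dfrac{cn(\alpha+2n)(c(\alpha+n)+2)+n}{\beta^2 c^2}$, from which the $\beta$-cancellation and the stated purity are immediate; your $a\to 0$ sanity check is exactly the content of Corollary~\ref{coro1}.
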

\begin{proof}
According to~(\ref{eq:approx_obs}), it suffices to compute $m_1$ and $m_2$. We utilize the recurrence relations provided in Proposition~\ref{prop:r2}, where the needed initial condition is 
\begin{equation}\label{eq:m0}
    m_0=n.
\end{equation}
Explicitly, setting $k=1$ in Proposition~\ref{prop:r2}, we have
\begin{eqnarray}\label{eq:m10}
 m_1&=&r_1 m_0^{-1}+r_2  m_0+r_3m_0^{+1},
\end{eqnarray}
where
\begin{eqnarray}
 r_1&=&-\frac{n \left(\alpha ^2 c^2-\alpha  c^2+c^2 n^2+2 \alpha  c^2 n-c^2 n+2 \alpha  c+6 c n-c+1\right) }{6 \beta  c (\alpha  c+c n+1)}\label{eq:cr1}\\
 r_2&=&-\frac{n \left(2 \alpha ^2 c^2+\alpha  c^2+2 c^2 n^2+4 \alpha  c^2 n+c^2 n+4 \alpha  c+c+2\right) }{3 \beta  c (\alpha  c+c n+1)}\\
 r_3&=&\frac{n \left(5 \alpha ^2 c^2+\alpha  c^2+5 c^2 n^2+10 \alpha  c^2 n+c^2 n+10 \alpha  c+6 c n+c+5\right) }{6 \beta  c (\alpha  c+c n+1)}\label{eq:cr3}.
\end{eqnarray}
Inserting into the above relation (\ref{eq:m10}) the initial data (\ref{eq:m0}) leads to
\begin{equation}\label{eq:m1}
    m_1=\frac{c n (\alpha +n)+n}{\beta  c}.
\end{equation}
With $m_0$ and $m_1$ known, setting $k=2$ in Proposition~\ref{prop:r2}, we obtain
\begin{equation}\label{eq:m2}
m_2=\frac{c n (\alpha +2 n) (c (\alpha +n)+2)+n}{\beta ^2 c^2}.    
\end{equation}
Putting together (\ref{eq:m1})--(\ref{eq:m2}) in (\ref{eq:approx_obs}), we complete the proof.
\end{proof}

We note that the leading-order behavior of the average purity obtained here,
\begin{equation}\label{eq:HS_purity_asymp}
\mathbb{E}[P]=\frac{2}{n}+O\!\left(n^{-2}\right)
\end{equation}
is consistent with the classical asymptotics~\cite{ZyczkowskiSommers2001} over the Hilbert-Schmidt ensemble.

\begin{corollary}\label{coro1}
The limit
\begin{equation}\label{eq:m2_LW}
\lim_{a\to 0} m_2=\frac{n(n+\alpha)(2n+\alpha)}{\beta^2}
\end{equation}
reproduces the average purity~\cite{ZyczkowskiSommers2001} over the Hilbert-Schmidt ensemble after exact moment conversion.
\end{corollary}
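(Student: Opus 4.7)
The plan is to split the corollary into two pieces: the explicit limit $\lim_{a\to 0}m_2$, and the verification that this value, when converted back to a fixed-trace expectation, reproduces the Życzkowski-Sommers formula~\cite{ZyczkowskiSommers2001}.

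For the limit itself, I would start from the explicit formula~(\ref{eq:m2}) and expand the numerator in powers of $c$,
\begin{equation*}
m_2 = \frac{n(\alpha+2n)(\alpha+n)}{\beta^2} + \frac{2n(\alpha+2n)}{\beta^2 c} + \frac{n}{\beta^2 c^2}.
\end{equation*}
Recalling from~(\ref{eq:c}) that $c=2tT/(a(T-t))$, so $c\to\infty$ as $a\to 0^+$, the two subleading terms vanish and the claimed identity follows immediately.

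For the second part, I would return to the $a\to 0$ joint density~(\ref{eq:gwi}) and apply the change of variables $x_i=r\lambda_i$, $r=\sum_i x_i$, with $\sum_i\lambda_i=1$. The Jacobian $r^{n-1}$ together with the factors $\prod x_i^\alpha$, $\prod(x_i-x_j)^2$, and $e^{-\beta r}$ produces a density in which $r$ is gamma distributed with shape $n^2+n\alpha$ and rate $\beta$, \emph{independent} of the normalized eigenvalues $\boldsymbol\lambda$. This independence is precisely the feature that fails in the dynamical ensemble at $a>0$, and it is what makes the exact moment conversion $\mathbb{E}[P]_{HS}=\mathbb{E}[R_2]/\mathbb{E}[r^2]$ valid here.

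With this factorization, $\mathbb{E}[r^2]=(n^2+n\alpha)(n^2+n\alpha+1)/\beta^2$, so that
\begin{equation*}
\mathbb{E}[P]_{HS}=\frac{n(n+\alpha)(2n+\alpha)/\beta^2}{(n^2+n\alpha)(n^2+n\alpha+1)/\beta^2}=\frac{2n+\alpha}{n(n+\alpha)+1},
\end{equation*}
which is the classical Życzkowski-Sommers expression. The whole argument is essentially bookkeeping: the one step requiring care is checking that the exponent of $r$ after the change of variables is $n^2+n\alpha-1$, so that $r$ has exactly the right gamma law; apart from this, both parts reduce to elementary manipulations of an explicit rational expression in the finite-dimensional parameters.
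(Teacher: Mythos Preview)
Your proposal is correct and follows essentially the same approach as the paper: you use the explicit $m_2$ formula~(\ref{eq:m2}) together with $c\to\infty$ as $a\to 0$ for the limit, and then the factorization of~(\ref{eq:gwi}) under $x_i=r\lambda_i$ into the Hilbert--Schmidt density times an independent Gamma$(n(n+\alpha),\beta)$ law for $r$ to recover $\mathbb{E}_{\mathrm{HS}}[P]=(2n+\alpha)/(n(n+\alpha)+1)$. The only (minor) difference is that you spell out the $1/c$ expansion of $m_2$ explicitly, whereas the paper simply records the limit and focuses on the moment-conversion step.
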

   
\begin{proof}
We recall that in the limit $a\to0$, the non-intersecting squared Bessel process is reduced to the Wishart-Laguerre ensemble up to a rescaling~(\ref{eq:gwi}). Under the change of variables (\ref{eq:cvb}), the joint density~(\ref{eq:gwi})
factorizes into the density of Hilbert-Schmidt ensemble
\begin{eqnarray}\label{eq:hs}
f_{\text{HS}}(\bm{\lambda})\propto
\delta\!\left(1-\sum_{i=1}^n \lambda_i\right)
\prod_{i=1}^{n}\lambda_i^\alpha \prod_{1\le i<j\le n}(\lambda_i-\lambda_j)^2,\label{eq:hb}
\end{eqnarray}\label{eq:gamma}
and an independent Gamma density for $r$
\begin{equation}
 h(r)=\frac{\beta^{n(n+\alpha)}}{\Gamma(n(n+\alpha))}\,r^{n(n+\alpha)-1}\e^{-\beta r}.
\end{equation}
The factorization ensures the exact moment conversion, cf.~\cite{Page93,Ruiz95,Wei17,Wei20,HWC21,HW25}, 
\begin{equation}\label{eq:purity_conversion}
\lim_{a\to 0} m_2=\mathbb{E}_{\text {HS}}[P]\mathbb{E}_{h}\!\left[r^2\right]
\end{equation}
with
\begin{equation}\label{eq:Er2}
\mathbb{E}_{h}[r^2]=\frac{n(n+\alpha)\bigl(n(n+\alpha)+1\bigr)}{\beta^2},
\end{equation}
which yields
\begin{equation}\label{eq:HS_purity}
\mathbb{E}_{\text{HS}}[P]
=\frac{2n+\alpha}{n(n+\alpha)+1},
\end{equation}
in agreement with the known average purity formula over Hilbert-Schmidt ensemble~\cite{ZyczkowskiSommers2001}. The proof is completed.
\end{proof}

\subsection{Computation of average entanglement entropy}\label{sec:v}
The task of computing average entanglement entropy boils down to compute $\mathbb{E}[T]$, where we need the inverse moment $m_{-1}$ provided in Lemma~\ref{lemmaim} as one of the initial conditions.

\begin{lemma}\label{lemmaim}
 The inverse moment $m_{-1}$ is given by
\begin{eqnarray}
    m_{-1}&=&-\frac{n \beta }{\alpha c(\alpha +  n)}\, _1F_1\left(n+1;n+\alpha +1;-\frac{1}{c}\right) \, _1F_1\left(1-n;-n-\alpha +1;\frac{1}{c}\right)\nonumber\\
    &&+\frac{\beta (\alpha +n)}{\alpha } \, _1F_1\left(-n;-n-\alpha ;\frac{1}{c}\right) \, _1F_1\left(n;n+\alpha ;-\frac{1}{c}\right)-\beta.\label{eq:m-1result}
\end{eqnarray}   
\end{lemma}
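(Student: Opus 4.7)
The plan is to evaluate $m_{-1}$ by combining a reduction to $K(x,0)$ via orthogonality with the derivative Christoffel-Darboux identity~\eqref{eq:cdd} specialized at $y = 0$. This immediately collapses the otherwise $n$-term defining sum into just two product terms plus a constant, bypassing any explicit summation of confluent hypergeometric series.

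Starting from $m_{-1} = \int_0^\infty x^{-1} K(x,x) \, \dd x$, I would first reduce to an integral involving $K(x,0)$. Since $P_j$ is a monic polynomial of degree $j$, the function $(P_j(x) - P_j(0))/x$ is a polynomial of degree at most $j-1$, so by~\eqref{eq:orth1} it is annihilated against $Q_j$. Summing over $j = 0, \dots, n-1$ yields
\[
m_{-1} = \sum_{j=0}^{n-1} P_j(0) \int_0^\infty \frac{Q_j(x)}{x} \, \dd x = \int_0^\infty \frac{K(x,0)}{x} \, \dd x.
\]
Next, I apply~\eqref{eq:cdd} at $y=0$, divide by $x$, and integrate on $(0,\infty)$. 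The derivative-contribution term collapses as $K(x,0)\big|_{0}^{\infty}$, which vanishes on both ends---at infinity by exponential decay of the weights, and at zero because $w_1(0) = w_2(0) = 0$ for $\alpha > 0$, hence $K(0,0) = 0$. The term $-\beta \int_0^\infty K(x,0) \, \dd x$ reduces to $-\beta$ using $\int_0^\infty Q_j(x) \, \dd x = \delta_{j,0}$ from~\eqref{eq:orth1}--\eqref{eq:orth2} and $P_0 \equiv 1$. The outcome is
\[
m_{-1} = -\beta + \frac{n\, P_{n-1}(0)}{\beta c} \int_0^\infty \frac{Q_n(x)}{x} \, \dd x - \beta\, P_n(0) \int_0^\infty \frac{Q_{n-1}(x)}{x} \, \dd x.
\]

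The remaining step evaluates the four scalar quantities explicitly. Using~\eqref{eq:pklkx} with the rescaling~\eqref{eq:r2} and the Laguerre-to-$\, _1F_1$ identity gives
\[
P_j(0) = \frac{(-1)^j (\alpha+1)_j}{\beta^j} \, _1F_1(-j; -j-\alpha; 1/c).
\]
For the integrals, Rodrigues' formula~\eqref{eq:rdq} applied to $\widetilde Q_j^{(\alpha,c)}$ together with $j$-fold integration by parts (boundary terms vanish for $\alpha > 0$) reduces the problem to $j!\int_0^\infty x^{-j-1} w_{j+\alpha,c}(x)\, \dd x$; termwise integration of the Bessel series~\eqref{eq:Iv}, Kummer's transformation, and the rescaling~\eqref{eq:r1} then produce
\[
\int_0^\infty \frac{Q_j(x)}{x} \, \dd x = \frac{(-1)^j \beta^{j+1}}{\alpha (\alpha+1)_j} \, _1F_1(j+1; j+\alpha+1; -1/c).
\]
Substituting these at $j = n-1$ and $j = n$, and simplifying the Pochhammer ratio $(\alpha+1)_n/(\alpha+1)_{n-1} = \alpha+n$, yields~\eqref{eq:m-1result}.

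The main obstacle is conceptual rather than combinatorial: the direct sum $\sum_{j=0}^{n-1} P_j(0) \int Q_j(x)/x \, \dd x$ has no evident closed form, and one might be tempted to attack it through contiguous relations for $\, _1F_1$, which is a lengthy path. The key insight is that the derivative identity~\eqref{eq:cdd} at $y=0$, combined with the two boundary facts $K(0,0) = 0$ and $\int_0^\infty K(x,0) \, \dd x = 1$, already performs the collapse. After this, the only technical care required is applying Kummer's transformation in the correct direction so that the four $\, _1F_1$ factors assume the precise form stated in~\eqref{eq:m-1result}.
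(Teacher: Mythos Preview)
Your proof is correct and follows essentially the same route as the paper: both invoke Proposition~\ref{prop:cdd} to collapse the $n$-term kernel sum into the three-term expression~\eqref{eq:m-1}, and then evaluate $P_j(0)$ and $\int_0^\infty x^{-1}Q_j(x)\,\dd x$ via the explicit formula~\eqref{eq:pklkx} and the integral identity~\eqref{eq:xqk}. The only cosmetic difference is packaging---the paper sets $k=-1$ in~\eqref{eq:tk1} (the diagonal version~\eqref{eq:cdd1}) and then expands $P_n$, $P_{n-1}$ in monomials, so the constant $-\beta$ comes from the leading term $x^n$ in $P_n$; you instead first reduce $m_{-1}$ to $\int_0^\infty K(x,0)/x\,\dd x$ and apply~\eqref{eq:cdd} at $y=0$, so the same $-\beta$ arises from $\int_0^\infty K(x,0)\,\dd x=1$.
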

Here, ${}_1F_1$ denotes the confluent hypergeometric function
\begin{equation}
{}_1F_1(a;b;z)
  = \sum_{j=0}^{\infty} \frac{(a)_j}{(b)_j}\,\frac{z^j}{j!}.
\end{equation}
\begin{proof}
As an application of Proposition~\ref{prop:cdd}, setting $k=-1$ in (\ref{eq:tk1}) gives
\begin{equation}
m_{-1}=-\beta I_{-1}^{n-1,n}+\frac{n}{\beta  c}I_{-1}^{n,n-1}.
\end{equation} 
Due to orthogonality, many terms cancel if we expand the polynomials $P_{n-1}$ and $P_n$ in the above integrals, resulting in
\begin{eqnarray}
m_{-1}&=&-\beta^{1-n}\Gamma (n+1)  L_{n}^{(-n-\alpha -1)}\!\left(\frac{1}{c}\right)\int_0^{\infty}x^{-1}Q_{n-1}(x)\dd x-\beta \nonumber\\
&&+\frac{n }{ c}\beta^{-n}\Gamma (n) L_{n-1}^{(-n-\alpha)}\!\left(\frac{1}{c}\right)\int_0^{\infty}x^{-1}Q_{n}(x) \dd x,\label{eq:m-1}
\end{eqnarray}
where the first and second integrals arise from the constant terms in $P_{n}(x)$ and $P_{n-1}(x)$, respectively, while the term $-\beta$ is contributed by the highest-order term $x^n$ in $P_{n}$ as
\begin{equation}
 -\beta \int_0^{\infty} x^{n-1} Q_{n-1}(x) \dd x=-\beta.
\end{equation}
The result (\ref{eq:m-1}) then follows after applying the integral identity 
\begin{eqnarray}
\int_0^{\infty } x^{s-1} \widetilde{Q}_n^{(\alpha,c)}(x) \dd x &=&\frac{(-1)^n \e^{1/c} \Gamma (s) \Gamma (s+\alpha )}{ c^{s+\alpha}\Gamma (n+\alpha +1) \Gamma (s-n)}\nonumber\\
&&\times ~_1F_1\left(n-s+1;\alpha +n+1;-\frac{1}{c}\right),\label{eq:xqk} 
\end{eqnarray}
which is derived by inserting the Rodrigues' formula (\ref{eq:rdq}) into the left-hand-side of~(\ref{eq:xqk}), and performing integration by parts $n$ times. 
\end{proof}




\begin{proposition}\label{prop:ee}
The mean value of entanglement entropy~(\ref{eq:von}) of the dynamical ensemble~(\ref{eq:fbi}) is 
\begin{equation}\label{eq:eS}
\mathbb{E}[S]= \ln \frac{c n (\alpha +n)+n}{\beta  c}-\frac{\beta  c}{c n (\alpha +n)+n}\mathbb{E}\!\left[T\right].
\end{equation}
\end{proposition}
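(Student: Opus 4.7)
The plan is to reduce the statement to a direct substitution into the approximate moment-conversion formula (\ref{eq:approx_obs}), using input already produced in the preceding proof of Proposition~\ref{prop:purity}. Equation~(\ref{eq:approx_obs}) expresses $\mathbb{E}[S]$ in the dynamical ensemble as
\begin{equation*}
\mathbb{E}[S]=\ln m_1-\frac{1}{m_1}\,\mathbb{E}[T],
\end{equation*}
where $m_1$ was computed in closed form by specializing the recurrence relation of Proposition~\ref{prop:r2} at $k=1$ with the initial datum $m_0=n$, yielding the explicit expression in (\ref{eq:m1}). So the first step is simply to recall this value of $m_1$.

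The second and final step is to substitute $m_1=\frac{cn(\alpha+n)+n}{\beta c}$ into (\ref{eq:approx_obs}). The logarithmic term becomes $\ln\frac{cn(\alpha+n)+n}{\beta c}$, and the prefactor of $\mathbb{E}[T]$ becomes $\frac{\beta c}{cn(\alpha+n)+n}$, yielding exactly the stated identity~(\ref{eq:eS}). There is no genuine obstacle inside this argument, since both (\ref{eq:approx_obs}) and (\ref{eq:m1}) are available.

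The substantive remaining task, which Proposition~\ref{prop:ee} intentionally packages into $\mathbb{E}[T]$ rather than evaluates, is the evaluation of $\mathbb{E}[T]=\mathbb{E}\!\left[\sum_i x_i\ln x_i\right]$. The natural route is to exploit the identity $T_k=\frac{\dd}{\dd k}R_k$, so that $\mathbb{E}[T]=\frac{\dd}{\dd k}m_k\big|_{k=1}$. Because the recurrence in Proposition~\ref{prop:r1} is stated for real $k$, differentiating it with respect to $k$ yields a recurrence for $\frac{\dd m_k}{\dd k}$ whose evaluation at $k=1$ expresses $\mathbb{E}[T]$ in terms of the already-computed spectral moments together with the inverse moment $m_{-1}$ supplied by Lemma~\ref{lemmaim}. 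The main delicate point in that separate computation — the step I expect to be the real difficulty — is handling the $\frac{\dd}{\dd T}$ terms appearing in the right-hand side of (\ref{eq:propr1}) after differentiation in $k$, since one must keep the $T$-derivative and the $k$-derivative consistent; but this is orthogonal to the proof of Proposition~\ref{prop:ee} itself, which is completed by the substitution described above.
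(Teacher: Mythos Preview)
Your proof of the proposition itself is correct and identical to the paper's: one simply substitutes the value of $m_1$ from~(\ref{eq:m1}) into the conversion formula~(\ref{eq:approx_obs}).

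Regarding your tangential remarks on evaluating $\mathbb{E}[T]$: the paper does not differentiate Proposition~\ref{prop:r1} in $k$, precisely to avoid the mixed $\tfrac{\dd}{\dd T}$ terms you identify as the delicate point. Instead it differentiates the recurrence of Proposition~\ref{prop:r2} and sets $k=1$, which expresses $\mathbb{E}[T]$ in terms of $\mathbb{E}^{\pm}[T_0]$, $\mathbb{E}[T_0]$, and the inverse moments $m_{-1}$, $m_{-1}^{+1}$, $m_{-1}^{+2}$ (via Lemma~\ref{lemmaim}); the required $\mathbb{E}[T_0]$ is then computed directly from the confluent Christoffel--Darboux formula~(\ref{eq:cdc}) together with the logarithmic integral obtained by differentiating~(\ref{eq:xqk}) in $s$. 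Your route via Proposition~\ref{prop:r1} is not wrong in principle, but the paper's choice of Proposition~\ref{prop:r2} sidesteps exactly the obstacle you anticipated.
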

An explicit expression of $\mathbb{E}\!\left[T\right]$ is given by
\begin{eqnarray}\label{eq:ETe}
    \mathbb{E}[T]&=&\frac{e^{-1/c}}{\beta }\left(n\sum _{k=0}^{n-3} \frac{(-1)^k c^{\alpha +k+2} \Gamma (-k+n-2)}{\Gamma (k+1)}\Phi_{\alpha ,\alpha }^{k+1,n-1} L_{n-k}^{(-n-\alpha -1)}\left(\frac{1}{c}\right)\right.\nonumber\\
    &&-\sum _{k=0}^{n-2} \frac{c^{\alpha +k+1} \left((-1)^k \Gamma (-k+n-1)\right)}{\Gamma (k+1)}\Phi _{\alpha ,\alpha }^{k+1,n} L_{-k+n-1}^{(-n-\alpha) }\left(\frac{1}{c}\right)\nonumber\\
    &&-(-1)^n n\sum _{k=0}^2 \frac{c^{k+\alpha +n} L_{2-k}^{(-n-\alpha -1)}\left(\frac{1}{c}\right)}{\Gamma (k+1) \Gamma (k+n-1)}\Bigg((\psi_0(k+n)-\psi_0(k+1)) \Phi _{\alpha ,\alpha }^{k+n-1,n-1}\nonumber\\
    &&+\frac{\dd \left(\Phi _{v,v}^{k+n-1,n-1}-\Phi _{v,\alpha }^{k+n-1,n-1}\right)}{\dd v}\Bigg)\!-\frac{(-1)^n c^{\alpha +n}}{ \Gamma (n)}\Bigg((\psi_0(n+1)+\gamma ) \nonumber\\
    &&\times \Phi _{\alpha ,\alpha }(n,n)+\!\left.\!\left.\frac{\dd \left(\Phi _{v,v}(n,n)-\Phi _{v,\alpha }(n,n)\right)}{\dd v}\Bigg)\right)\right\rvert_{v\to\alpha}+\frac{ n (\alpha  c+c n+1)}{\beta  c}\nonumber\\
   && \times \left(\log \left(\frac{c}{\beta }\right)-1\right),
\end{eqnarray}
where 
\begin{eqnarray}
\Phi_{v_1,v_2}^{k,n}&=& \frac{(-1)^n  c^{-k-v_2-1} \Gamma (k+1)\Gamma (k+v_2+1) }{\Gamma (n+v_1+1)}\nonumber\\
&&\times \, _1F_1\left(k+v_2+1;n+v_1+1;\frac{1}{c}\right),
\end{eqnarray}
the function $\psi_0(z)$ denotes the digamma function, defined by
\begin{equation}
\psi_0(z)=\frac{\mathrm{d}}{\mathrm{d}z}\ln\Gamma(z),
\end{equation}
and
\begin{equation}
\psi_0(1)=-\gamma
\end{equation}
with $\gamma$ being the Euler gamma.
\begin{proof}
The result (\ref{eq:eS}) is obtained by substituting the expression (\ref{eq:m1}) of $m_1$ into (\ref{eq:approx_obs}). The explicit expression of (\ref{eq:ETe}) is obtained as follows.

 By taking derivative of the recurrence relation (\ref{eq:r2}) given in Proposition~\ref{prop:r2} before setting $k=1$, we arrive at
    \begin{eqnarray}\label{eq:t10}
    \mathbb{E}\!\left[T\right]&=&r_1 \mathbb{E}^{+}\!\left[T_0\right]+r_2 \mathbb{E}\!\left[T_0\right]+r_3 \mathbb{E}^{-}\!\left[T_0\right]+r_0,
\end{eqnarray}
where the constants $r_1$, $r_2$, $r_3$ are given in (\ref{eq:cr1})--(\ref{eq:cr3}) and $r_0$ is
\begin{eqnarray}
r_0&=&-\frac{\alpha }{6 \beta ^2 c (\alpha  c+c n+1) (\alpha  c+c n+c+1)}\left(\alpha ^4 c^4+\alpha ^3 c^4-\alpha ^2 c^4-\alpha  c^4+c^4 n^4\right.\nonumber\\
&&+4 \alpha  c^4 n^3+c^4 n^3+6 \alpha ^2 c^4 n^2+3 \alpha  c^4 n^2-c^4 n^2+4 \alpha ^3 c^4 n+3 \alpha ^2 c^4 n-c^4 n\nonumber\\
&&-2 \alpha  c^4 n+4 \alpha ^3 c^3+\alpha ^2 c^3-4 \alpha  c^3+12 c^3 n^3+28 \alpha  c^3 n^2+13 c^3 n^2+20 \alpha ^2 c^3 n\nonumber\\
&&+14 \alpha  c^3 n-c^3+6 \alpha ^2 c^2-\alpha  c^2+18 c^2 n^2+28 \alpha  c^2 n+7 c^2 n-3 c^2+4 \alpha  c\nonumber\\
&&\!\left.+12 c n-c+1\right)m_{-1}+\frac{2 \alpha  c n\!\left(m_{-1}^{+1}-(n+1)m_{-1}^{+2}\right)}{3 \beta ^2 (\alpha  c+c n+1) (\alpha  c+c n+c+1)}\nonumber\\
&&+\frac{n}{6 (\beta +\beta  c (\alpha +n))}\left(12 n+\alpha +c^2 \left(\alpha ^3-\alpha +n^3+3 \alpha  n^2+3 \alpha ^2 n-n\right)\right.\nonumber\\
&&\!\left.+c \left(2 \alpha ^2-4 \alpha +13 n^2+15 \alpha  n-1\right)\right).
\end{eqnarray}
The expression of $m_{-1}$ is given in Lemma~\ref{lemmaim}, hence $r_0$. To compute $\mathbb{E}\!\left[T_0\right]$, we utilize the Christoffel-Darboux formula~(\ref{eq:cdc}) in the one-point correlation function and expand the involved type II polynomials by their explicit expressions converted from (\ref{eq:pklkx}). The resulting integrals are then evaluated by the following identity established from the derivative of (\ref{eq:xqk}),
\begin{eqnarray}
    \int_0^{\infty}x^{k}\ln x~\!Q_n(x)\dd x&=&\frac{\psi_0(k+1)-\psi_0(k-n+1)}{\Gamma(k+1-n)} \Phi _{\alpha ,\alpha }^{k,n}\nonumber\\
    &&+\!\left.\frac{1}{\Gamma(k+1-n)}\frac{\dd \left(\Phi_{v,v}^{k,n}-\Phi _{v,\alpha }^{k,n}\right)}{\dd v}\right\rvert_{v\to\alpha}.
\end{eqnarray}
The proof is completed after putting together the needed initial conditions in (\ref{eq:t10}).\\
\end{proof}

\begin{corollary}\label{coroT}
The limit 
\begin{equation}
    \lim_{a\to0}\mathbb{E}[T]=\frac{n (\alpha +n) }{\beta }\psi_0(n+\alpha )-\frac{n  (\alpha +n)}{\beta }\ln \beta +\frac{n (n+1)}{2 \beta}
\end{equation}
reproduces the average entanglement entropy over the Hilbert-Schmidt ensemble after exact moment conversion.
 \end{corollary}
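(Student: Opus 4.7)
The plan is to reduce $\lim_{a\to 0}\mathbb{E}[T]$ to an independent product of a Gamma moment and the classical Hilbert-Schmidt entanglement entropy, exploiting the same factorization of the joint density that drives Corollary~\ref{coro1}. First, substituting the change of variables $x_i = r\lambda_i$ and using $\sum_i \lambda_i = 1$ gives
\begin{equation}
T = \sum_{i=1}^n x_i \ln x_i = r\ln r - rS.
\end{equation}
As shown in the proof of Corollary~\ref{coro1}, in the limit $a\to 0$ the joint density of $(\boldsymbol\lambda, r)$ splits as $f_{\text{HS}}(\boldsymbol\lambda)\,h(r)$, so $r$ becomes independent of $S$, and therefore
\begin{equation}
\lim_{a\to 0}\mathbb{E}[T] = \mathbb{E}_h[r\ln r] - \mathbb{E}_h[r]\,\mathbb{E}_{\text{HS}}[S].
\end{equation}

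Next, I would compute the two Gamma moments directly from the density~(\ref{eq:gamma}) of shape $n(n+\alpha)$ and rate $\beta$, with $\mathbb{E}_h[r\ln r]$ obtained by differentiating the Mellin transform $\int_0^\infty r^s h(r)\,\dd r$ at $s=1$:
\begin{equation}
\mathbb{E}_h[r] = \frac{n(n+\alpha)}{\beta}, \qquad \mathbb{E}_h[r\ln r] = \frac{n(n+\alpha)}{\beta}\bigl(\psi_0(n(n+\alpha)+1) - \ln\beta\bigr).
\end{equation}
The remaining ingredient is the classical Page formula for the Hilbert-Schmidt ensemble~\cite{Page93,Foong94,Ruiz95},
\begin{equation}
\mathbb{E}_{\text{HS}}[S] = \psi_0(n(n+\alpha)+1) - \psi_0(n+\alpha+1) - \frac{n-1}{2(n+\alpha)}.
\end{equation}
Inserting these three expressions into the factorized form cancels both occurrences of $\psi_0(n(n+\alpha)+1)$, and a single application of $\psi_0(z+1) = \psi_0(z) + 1/z$ converts $\psi_0(n+\alpha+1)$ into $\psi_0(n+\alpha) + 1/(n+\alpha)$. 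After this reduction, the leftover constant pieces combine as $n/\beta + n(n-1)/(2\beta) = n(n+1)/(2\beta)$ to give the claimed identity.

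The step I expect to require the most care is the final algebraic simplification: keeping track of the $-\ln\beta$ contribution, which arises solely from $\mathbb{E}_h[r\ln r]$ with no matching term on the Hilbert-Schmidt side, and ensuring the polygamma and fractional remainders collapse to exactly $\frac{n(n+\alpha)}{\beta}\psi_0(n+\alpha) - \frac{n(n+\alpha)}{\beta}\ln\beta + \frac{n(n+1)}{2\beta}$. No analytic difficulty arises beyond this bookkeeping, since the $a\to 0$ limit is dictated entirely by the factorization already exploited in Corollary~\ref{coro1}.
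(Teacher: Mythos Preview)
Your proposal is correct and follows essentially the same approach as the paper: both use the factorization $p(\boldsymbol\lambda,r)=f_{\mathrm{HS}}(\boldsymbol\lambda)\,h(r)$ in the $a\to 0$ limit, the identity $\lim_{a\to 0}\mathbb{E}[T]=\mathbb{E}_h[r\ln r]-\mathbb{E}_h[r]\,\mathbb{E}_{\mathrm{HS}}[S]$, the same Gamma moments, and the Page formula. The only cosmetic difference is direction: the paper starts from the stated limit formula and inverts the moment conversion to recover $\mathbb{E}_{\mathrm{HS}}[S]=\psi_0(n(n+\alpha)+1)-\psi_0(n+\alpha)-\frac{n+1}{2(n+\alpha)}$, whereas you start from the (equivalent) Page formula $\psi_0(n(n+\alpha)+1)-\psi_0(n+\alpha+1)-\frac{n-1}{2(n+\alpha)}$ and push forward to obtain the limit expression; your use of $\psi_0(z+1)=\psi_0(z)+1/z$ is exactly what reconciles the two forms.
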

\begin{proof}
Similar to the purity case, by performing the moment conversion
\begin{equation}
\lim_{a\to0}\mathbb{E}[T]= \mathbb{E}_h\!\left[r\ln r\right]-\mathbb{E}_h\!\left[r\right] \mathbb{E}_{\text{HS}}\!\left[S\right]
\end{equation}
with
\begin{eqnarray}
\mathbb{E}_h\!\left[r\right]&=&\frac{n (n+\alpha)}{\beta  },\\
\mathbb{E}_h\!\left[r\ln r\right]&=&\frac{n (n+\alpha) }{\beta }(\psi_0(n (n+\alpha)+1)-\ln \beta ),
\end{eqnarray}
we recover the average entanglement entropy over the Hilbert-Schmidt ensemble as
\begin{equation}
\mathbb{E}_{\text {HS}}\!\left[S\right]=\psi_0(n (n+\alpha )+1)-\psi_0(n+\alpha )-\frac{n+1}{2 (\alpha +n)},
\end{equation}
with $n$ and $n+\alpha$ being the dimensions of the smaller and larger subsystems, respectively, in Page’s bipartite setting~\cite{Page93}.    
\end{proof}

\section*{Acknowledgment}
The authors wish to express their sincere gratitude to Makoto Katori for his hospitality during their visit to Chuo University, where the present work was initiated. Lu Wei also acknowledges the support by the U.S. National Science Foundation (2306968) and the U.S. Department of Energy (DE-SC0024631).

\appendix
\section{Coefficients and identities}
In the appendices, we list the explicit coefficients underlying the shorthand notations used in main text, together with several identities established and invoked in the proof.
\subsection{Coefficients in Proposition~\ref{prop:r1}}
\begin{eqnarray}
 b_0&=& \frac{2 \alpha  (4 k+13)-c (k+4) \left(-\alpha ^2+k^2+5 k-\alpha  n+6\right)}{\beta ^2},\\
 b_1&=&\frac{1}{\beta  c}\left(156 \alpha -12 \alpha ^3 c^2+48 \alpha  c^2+3 \alpha  c^2 k^4+7 c^2 k^4 n+27 \alpha  c^2 k^3+69 c^2 k^3 n\right.\nonumber\\
 &&-3 \alpha ^3 c^2 k^2+84 \alpha  c^2 k^2-4 \alpha  c^2 k^2 n^2-7 \alpha ^2 c^2 k^2 n+242 c^2 k^2 n-15 \alpha ^3 c^2 k.\nonumber\\
 &&+108 \alpha  c^2 k-22 \alpha  c^2 k n^2-37 \alpha ^2 c^2 k n+360 c^2 k n-24 \alpha  c^2 n^2-36 \alpha ^2 c^2 n.\nonumber\\
 &&+192 c^2 n-24 \alpha ^2 c+11 c k^4+81 c k^3-11 \alpha ^2 c k^2-28 \alpha  c k^2 n+202 c k^2.\nonumber\\
 &&\!\left.-41 \alpha ^2 c k-130 \alpha  c k n+192 c k-132 \alpha  c n+48 c+32 \alpha  k^2+152 \alpha  k\right)\!,
  \end{eqnarray}
 \begin{eqnarray}
 b_2&=&\frac{3 }{c}\big(c^2 \left(k^2+6 k+8\right) \left(-\alpha ^2+k^2+6 k-m^2-2 \alpha  m+9\right)+2 c (k+4)\nonumber\\
 &&\times(5 \alpha +2 \alpha  k+4 k m+11 m)+2 \left(10 k^2+57 k+80\right)\!\big),\\
 b_3&=&-6 \beta,\\
 b_4&=&2 (t-T)^2 \left(-24 \alpha  c+\alpha  c k^4+3 c k^4 n+6 \alpha  c k^3+27 c k^3 n-\alpha ^3 c k^2+5 \alpha  c k^2\right.\nonumber\\
 &&-\alpha ^2 c k^2 n+84 c k^2 n-4 \alpha ^3 c k-18 \alpha  c k-4 \alpha ^2 c k n+108 c k n+48 c n+24 k^4\nonumber\\
 &&\!\left.+198 k^3-8 \alpha ^2 k^2+582 k^2-26 \alpha ^2 k+720 k+312\right)\!,\\
 b_5&=&-\frac{2 (t-T)^2\beta}{c (k+2)}\left(c \left(24 \alpha +4 \alpha  k^2+21 k^2 n+19 \alpha  k+108 k n+132 n\right)-c^2 (k+4)\right.\nonumber\\
 &&\times\left(-3 \alpha ^2+2 k^3+16 k^2-2 \alpha ^2 k-3 k n^2-5 \alpha  k n+42 k-6 n^2-9 \alpha  n+36\right)\nonumber\\
 &&\!\left.-6 (k+2) (4 k+13)\right)\!,\\
b_6&=&3 \alpha  \beta  c^3 (k-1) k^2 (k+1) \left((k-1)^2-\alpha ^2\right)\!,\\
b_7&=&\beta ^2 c^2 k (k+1) \left(-2 \alpha ^3-\alpha ^4 c+2 c k^4-2 c k^3+5 \alpha ^2 c k^2+13 \alpha  c k^2 n-\alpha ^2 c k\right.\nonumber\\
&&\!\left.-6 \alpha  c k n-\alpha ^3 c n-4 \alpha  k^2+3 \alpha  k\right)\!,\\
b_8&=&\beta ^3 c (k+1) \left(-4 \alpha +\alpha ^3 c^2+\alpha  c^2 k^3+8 c^2 k^3 n-2 \alpha  c^2 k^2-4 c^2 k^2 n+2 \alpha ^3 c^2 k\right.\nonumber\\
&&+\alpha  c^2 k+4 \alpha  c^2 k n^2+6 \alpha ^2 c^2 k n-4 c^2 k n+2 \alpha  c^2 n^2+3 \alpha ^2 c^2 n+10 c k^3\nonumber\\
&&\!\left.-2 c k^2+\alpha  c k n-8 c k+2 \alpha  c n-5 \alpha  k\right)\!,\\
b_9&=&-\beta ^4 c^2 k (k+2) \left(\alpha ^2 c-\alpha +2 c k^2-2 c k+\alpha  c n\right)\!,\\
b_{10}&=&c k (k+1) n \left(2 \alpha ^3+\alpha ^4 c+4 c k^4-4 c k^3-5 \alpha ^2 c k^2-7 \alpha  c k^2 n+4 \alpha ^2 c k+3 \alpha  c k n\right.\nonumber\\
&&\!\left.+\alpha ^3 c n-2 \alpha  k^2\right)\!,\\
b_{11}&=&-\beta  (k+1) n \left(-4 \alpha +2 c^2 k^3 n+2 c^2 k^2 n+\alpha  c^2 k n^2+\alpha ^2 c^2 k n-4 c^2 k n\right.\nonumber\\
&&\!\left.+2 \alpha  c^2 n^2+2 \alpha ^2 c^2 n-2 \alpha ^2 c+16 c k^3-8 c k^2-4 \alpha ^2 c k+7 \alpha  c k n-8 c k\right.\nonumber\\
&&\!\left.+2 \alpha  c n-8 \alpha  k\right)\!,\\
b_{12}&=&\beta ^2 c^2 k (k+1) \left(k^2-\alpha ^2\right) \left(\alpha -\alpha ^2 c+c k^2+c k-\alpha  c n\right)\!,\\
b_{13}&=&\beta ^3 c (k+1) \left(-2 \alpha -\alpha ^3 c^2+2 \alpha  c^2 k^3+4 c^2 k^3 n+3 \alpha  c^2 k^2+6 c^2 k^2 n-2 \alpha ^3 c^2 k\right.\nonumber\\
&&\!\left.+\alpha  c^2 k-4 \alpha  c^2 k n^2-6 \alpha ^2 c^2 k n+2 c^2 k n-2 \alpha  c^2 n^2-3 \alpha ^2 c^2 n+3 \alpha ^2 c+5 c k^3\right.\nonumber\\
&&\!\left.+9 c k^2+3 \alpha ^2 c k+8 \alpha  c k n+4 c k+4 \alpha  c n-4 \alpha  k\right)\!,\\
b_{14}&=&\beta ^4 \left(-c^2\right) k (k+2) \left(4 \alpha -\alpha ^2 c+c k^2+c k-\alpha  c n\right)\!,\\
b_{15}&=&c k (k+1) n \left(k^2-\alpha ^2\right) \left(-\alpha +\alpha ^2 c+2 c k^2+2 c k+\alpha  c n\right)\!,\\
b_{16}&=&\beta  (-(k+1)) n \left(-2 \alpha +3 \alpha  c^2 k^3+c^2 k^3 n+6 \alpha  c^2 k^2+3 c^2 k^2 n-3 \alpha ^3 c^2 k\right.\nonumber\\
&&\!\left.+3 \alpha  c^2 k-4 \alpha  c^2 k n^2-7 \alpha ^2 c^2 k n+2 c^2 k n-2 \alpha  c^2 n^2-2 \alpha ^2 c^2 n+2 \alpha ^2 c\right.\nonumber\\
&&\!\left.+8 c k^3+12 c k^2+4 \alpha ^2 c k+8 \alpha  c k n+4 c k+4 \alpha  c n-4 \alpha  k\right)\!, \\
b_{17}&=&6 \alpha  \beta ^2 c k (2 k+3) n.
\end{eqnarray}

\subsection{Relations of integrals $I_{s,t}$}
\begin{eqnarray}\label{eq:app31}
&&n(-1+c(-k+n))\beta\,\mathrm{I}^{n-1,n-1}_{k-1}
- cn\beta^{2}\mathrm{I}^{n-1,n-1}_{k}
- (1+c+c\alpha)\beta^{2}\mathrm{I}^{n-1,n}_{k-1}\nonumber\\
&&
+ n^{2}(1+c(n+\alpha))\mathrm{I}^{n-1,n}_{k-1}
+ c\beta^{3}\mathrm{I}^{n-1,n}_{k}
+ n(1+c(1+2n+\alpha))\beta\,\mathrm{I}^{n,n}_{k-1}
- cn\beta^{2}\mathrm{I}^{n,n}_{k}\nonumber\\
&&
- c\beta^{3}\mathrm{I}^{n+1,n-1}_{k-1}
+ cn\beta^{2}\mathrm{I}^{n+1,n-1}_{k}=0,
\end{eqnarray}
\begin{eqnarray}
&&n\beta\mathrm{I}^{n-1,n-1}_{k-1}
+ (1+c(1-k+2n+\alpha))\beta^{2}\mathrm{I}^{n-1,n}_{k-1}
- c\beta^{3}\mathrm{I}^{n-1,n}_{k}
\nonumber\\
&&+ n(1+c(n+\alpha))\beta\mathrm{I}^{n,n}_{k-1}
+ n(1+n)\mathrm{I}^{n,n+1}_{k-1}
+ c\beta^{3}\mathrm{I}^{n+1,n-1}_{k-1}=0,
\end{eqnarray}
\begin{eqnarray}
&&n(1+n)\mathrm{I}^{n-1,n-1}_{k-1}
+ (1+n)(1+c(1+n+\alpha))\beta\mathrm{I}^{n-1,n}_{k-1}
+ c(1+k)\beta^{2}\mathrm{I}^{n+1,n-1}_{k-1}\nonumber\\
&&
- n(1+c(n+\alpha))\beta\mathrm{I}^{n+1,n}_{k-1}
- n(1+n)\mathrm{I}^{n+1,n+1}_{k-1}=0,
\end{eqnarray}
\begin{eqnarray}
&&cn\beta\mathrm{I}^{n-1,n-1}_{k-1}
+ c\beta^{2}\mathrm{I}^{n-1,n}_{k-1}
+ n(2+c(k+n+\alpha))\mathrm{I}^{n-1,n}_{k-1}
\nonumber\\
&&+ (2+c(2+3n+2\alpha))\beta\mathrm{I}^{n,n}_{k-1}
- 2c\beta^{2}\mathrm{I}^{n,n}_{k}
+ 2c\beta^{2}\mathrm{I}^{n+1,n}_{k-1}=0,
\end{eqnarray}
\begin{eqnarray}
c(k-1)\beta\mathrm{I}^{n,n}_{k-1}
+ c\beta^{2}\mathrm{I}^{n-1,n}_{k-1}
- n\mathrm{I}^{n-1,n}_{k-1}
+ (1+n)\mathrm{I}^{n,n+1}_{k-1}
- c\beta^{2}\mathrm{I}^{n+1,n}_{k-1}=0,
\end{eqnarray}
\begin{eqnarray}\label{eq:appa36}
&&n(1+n)\mathrm{I}^{n-1,n}_{k-1}
+ (1+n)(1+c(1+n+\alpha))\beta\mathrm{I}^{n,n}_{k-1}
+ c\beta^{3}\mathrm{I}^{n+1,n-1}_{k-1}\nonumber\\
&&
+ (1+c(1+k+2n+\alpha))\beta^{2}\mathrm{I}^{n+1,n}_{k-1}
- c\beta^{3}\mathrm{I}^{n+1,n}_{k}
+ (1+n)\beta\mathrm{I}^{n+1,n+1}_{k-1}=0,~~~
\end{eqnarray}
\begin{eqnarray}
&&-(2+c+3cn+2c\alpha)\beta\,\mathrm{I}^{n,n}_{k-1}
+2c\beta^{2}\mathrm{I}^{n,n}_{k}
-2c\beta^{2}\mathrm{I}^{n-1,n}_{k-1}-c\beta^{2}\mathrm{I}^{n+1,n}_{k-1}\nonumber\\
&&
-(1+n)(2+c(1-k+n+\alpha))\mathrm{I}^{n,n+1}_{k-1}
-c(1+n)\beta\,\mathrm{I}^{n+1,n+1}_{k-1}=0,
\end{eqnarray}
\begin{eqnarray}\label{eq:appa38}
&&-c\beta^{2}\mathrm{I}^{n-1,n}_{k-1}
-(1+c(1+2n+\alpha))\beta\,\mathrm{I}^{n,n}_{k-1}
+c\beta^{2}\mathrm{I}^{n,n}_{k}
-(1+n)\nonumber\\
&&\times(1+c(1+n+\alpha))\mathrm{I}^{n,n+1}_{k-1}
+\frac{c\beta^{3}}{1+n}\mathrm{I}^{n+1,n-1}_{k-1}
+\frac{(1+c(\alpha-1))\beta^{2}}{1+n}\mathrm{I}^{n+1,n}_{k-1}
\nonumber\\
&&-\frac{c\beta^{3}}{1+n}\mathrm{I}^{n+1,n}_{k}
+(1-c(1+k+n))\beta\,\mathrm{I}^{n+1,n+1}_{k-1}
+c\beta^{2}\mathrm{I}^{n+1,n+1}_{k}=0.
\end{eqnarray}

\subsection{Coefficients in Proposition~\ref{prop:r2}}\label{app3}

\begin{eqnarray}
d_1&=&c (k+1) (2 k+1) (c k-\alpha  c-c n-c-1) \left(-2 \alpha  c^2+c^2 k^3-3 c^2 k^2 n-5 c^2 k^2\right.\nonumber\\
&&-\alpha ^2 c^2 k+\alpha  c^2 k+2 c^2 k n^2+\alpha  c^2 k n+9 c^2 k n+8 c^2 k-2 c^2 n^2-2 \alpha  c^2 n\nonumber\\
&&\!\left.-6 c^2 n-4 c^2+2 \alpha  c-3 c k^2-3 \alpha  c k+5 c k-2 c-2 k+2\right)\!,
\end{eqnarray}
\begin{eqnarray}
d_2&=&2 (k-1) (k+1) (k-n-1) (k-n-\alpha -2) (k-n-\alpha -1)^2 (k-n-\alpha ) \nonumber\\
&&\times(k+n+\alpha ) c^5+(k-n-\alpha -1) \left(2 k^6-15 n k^5-14 \alpha  k^5-18 k^5+58 n^2 k^4\right.\nonumber\\
&&+22 \alpha ^2 k^4+94 n k^4+77 n \alpha  k^4+62 \alpha  k^4+42 k^4-63 n^3 k^3-10 \alpha ^3 k^3-192 n^2 k^3\nonumber\\
&&-77 n \alpha ^2 k^3-58 \alpha ^2 k^3-157 n k^3-130 n^2 \alpha  k^3-233 n \alpha  k^3-82 \alpha  k^3+18 n^4 k^2\nonumber\\
&&-34 k^3+112 n^3 k^2+15 n \alpha ^3 k^2+14 \alpha ^3 k^2+166 n^2 k^2+48 n^2 \alpha ^2 k^2+112 n \alpha ^2 k^2\nonumber\\
&&+38 \alpha ^2 k^2\!+78 n k^2+51 n^3 \alpha  k^2+210 n^2 \alpha  k^2+176 n \alpha  k^2+30 \alpha  k^2+4 k^2-12 n^4 k\nonumber\\
&&-21 n^3 k-n \alpha ^3 k+2 \alpha ^3 k-14 n^2 \alpha ^2 k+9 n \alpha ^2 k+10 \alpha ^2 k+12 n k-25 n^3 \alpha  k\nonumber\\
&&-14 n^2 \alpha  k+18 n \alpha  k+8 \alpha  k+4 k-6 n^4-28 n^3-4 n \alpha ^3-6 \alpha ^3-32 n^2-14 n^2 \alpha ^2\nonumber\\
&&-32 n \alpha ^2-12 \alpha ^2-12 n-16 n^3 \alpha\!\left. -54 n^2 \alpha -36 n \alpha -4 \alpha \right) c^4-\left(14 k^6-92 n k^5\right.\nonumber\\
&&-46 \alpha  k^5-70 k^5+148 n^2 k^4+50 \alpha ^2 k^4+339 n k^4+192 n \alpha  k^4+174 \alpha  k^4+128 k^4\nonumber\\
&&-120 n^3 k^3-18 \alpha ^3 k^3-401 n^2 k^3-144 n \alpha ^2 k^3-134 \alpha ^2 k^3-422 n k^3-264 n^2 \alpha  k^3\nonumber\\
&&-510 n \alpha  k^3-228 \alpha  k^3-104 k^3+14 n^4 k^2+188 n^3 k^2+44 n \alpha ^3 k^2+84 n^2 \alpha ^2 k^2\nonumber\\
&&+306 n^2 k^2+30 \alpha ^3 k^2+231 n \alpha ^2 k^2+106 \alpha ^2 k^2+179 n k^2+54 n^3 \alpha  k^2+385 n \alpha  k^2\nonumber\\
&&+395 n^2 \alpha  k^2+112 \alpha  k^2+34 k^2-12 n^4 k-38 n^3 k-8 n \alpha ^3 k-6 \alpha ^3 k-22 n^2 \alpha ^2 k\nonumber\\
&&-13 n^2 k-31 n \alpha ^2 k-10 \alpha ^2 k+14 n k-26 n^3 \alpha  k-59 n^2 \alpha  k-25 n \alpha  k-6 \alpha  k-2k\nonumber\\
&&-2 n^4-30 n^3-6 n \alpha ^3-6 \alpha ^3-40 n^2-10 n^2 \alpha ^2-32 n \alpha ^2-12 \alpha ^2-18 n-56 n^2 \alpha\nonumber\\
&& \!\left.-6 n^3 \alpha -40 n \alpha -6 \alpha \right) c^3+\left(14 k^5-35 n k^4-28 \alpha  k^4-52 k^4+27 n^2 k^3+14 \alpha ^2 k^3\right.\nonumber\\
&&+64 n k^3+38 n \alpha  k^3+78 \alpha  k^3+72 k^3+30 n^3 k^2-14 n^2 k^2-63 n \alpha ^2 k^2-26 \alpha ^2 k^2\nonumber\\
&&-33 n k^2-15 n^2 \alpha  k^2-89 n \alpha  k^2-70 \alpha  k^2-44 k^2-20 n^3 k-7 n^2 k+21 n \alpha ^2 k\nonumber\\
&&+10 \alpha ^2 k+10 n k-5 n^2 \alpha  k+35 n \alpha  k+18 \alpha  k+10 k-10 n^3-6 n^2+12 n \alpha ^2\nonumber\\
&&+2 \alpha ^2-6 n-\!\left.2 n^2 \alpha +4 n \alpha +2 \alpha \right) c^2-2 \left(2 k^4+10 n k^3-2 \alpha  k^3-6 k^3-8 n^2 k^2\right.\nonumber\\
&&+22 n \alpha  k^2-9 n k^2+4 \alpha  k^2+6 k^2+6 n^2 k-3 n k-11 n \alpha  k-2 \alpha  k-2 k\nonumber\\
&&+\!\left.2 n^2+2 n-6 n \alpha \right) c-4 (k-1) (3 k+1) n,
\end{eqnarray}
\newpage
\begin{eqnarray}
d_3&=&\frac{1}{2} (k-1) (k+1) (k-n-\alpha -2) (k-n-\alpha -1) (k-n-\alpha ) (k+n+\alpha ) \nonumber\\
&&\times\left(k^2+2 n k+\alpha  k+k-2 n^2-4 n-2 n \alpha -2 \alpha -2\right) c^5+\frac{1}{2} \left(2 k^7-24 n k^6\right.\nonumber\\
&&-12 \alpha  k^6-25 k^6+52 n^2 k^5+8 \alpha ^2 k^5+102 n k^5+54 n \alpha  k^5+49 \alpha  k^5+55 k^5\nonumber\\
&&-8 n^3 k^4+12 \alpha ^3 k^4-121 n^2 k^4+34 n \alpha ^2 k^4+11 \alpha ^2 k^4-126 n k^4+14 n^2 \alpha  k^4\nonumber\\
&&-82 n \alpha  k^4-44 \alpha  k^4-33 k^4-42 n^4 k^3-10 \alpha ^4 k^3-76 n^3 k^3-78 n \alpha ^3 k^3-49 \alpha ^3 k^3\nonumber\\
&&+15 n^2 k^3-168 n^2 \alpha ^2 k^3-210 n \alpha ^2 k^3-57 \alpha ^2 k^3+26 n k^3-142 n^3 \alpha  k^3\nonumber\\
&&-237 n^2 \alpha  k^3-76 n \alpha  k^3-9 \alpha  k^3-9 k^3+20 n^5 k^2+108 n^4 k^2+14 n \alpha ^4 k^2\nonumber\\
&&+14 \alpha ^4 k^2+162 n^3 k^2+62 n^2 \alpha ^3 k^2+124 n \alpha ^3 k^2+44 \alpha ^3 k^2+107 n^2 k^2\nonumber\\
&&+102 n^3 \alpha ^2 k^2+314 n^2 \alpha ^2 k^2+228 n \alpha ^2 k^2+39 \alpha ^2 k^2+38 n k^2+74 n^4 \alpha  k^2\nonumber\\
&&+312 n^3 \alpha  k^2+346 n^2 \alpha  k^2+142 n \alpha  k^2+16 \alpha  k^2+10 k^2-12 n^5 k-30 n^4 k\nonumber\\
&&-2 n \alpha ^4 k+2 \alpha ^4 k-32 n^3 k-18 n^2 \alpha ^3 k+8 n \alpha ^3 k+9 \alpha ^3 k-35 n^2 k-42 n^3 \alpha ^2 k\nonumber\\
&&-20 n^2 \alpha ^2 k+22 n \alpha ^2 k+9 \alpha ^2 k-16 n k-38 n^4 \alpha  k-56 n^3 \alpha  k-19 n^2 \alpha  k-10 n \alpha  k\nonumber\\
&&-8 n^5-36 n^4-4 n \alpha ^4-6 \alpha ^4-46 n^3-20 n^2 \alpha ^3-42 n \alpha ^3-16 \alpha ^3-36 n^3 \alpha ^2\nonumber\\
&&-18 n^2-102 n^2 \alpha ^2-70 n \alpha ^2-10 \alpha ^2-28 n^4 \alpha -102 n^3 \alpha -\!\left.100 n^2 \alpha -28 n \alpha \right) c^4\nonumber\\
&&-\frac{1}{2} \left(27 k^6-64 n k^5-33 \alpha  k^5-87 k^5+35 n^2 k^4-27 \alpha ^2 k^4+156 n k^4+20 n \alpha  k^4\right.\nonumber\\
&&+44 \alpha  k^4+87 k^4+106 n^3 k^3+33 \alpha ^3 k^3+53 n^2 k^3+166 n \alpha ^2 k^3+87 \alpha ^2 k^3-92 n k^3\nonumber\\
&&+239 n^2 \alpha  k^3+132 n \alpha  k^3+5 \alpha  k^3-25 k^3-32 n^4 k^2-216 n^3 k^2-38 n \alpha ^3 k^2\nonumber\\
&&-44 \alpha ^3 k^2-185 n^2 k^2-72 n^2 \alpha ^2 k^2-252 n \alpha ^2 k^2-79 \alpha ^2 k^2-32 n k^2-66 n^3 \alpha  k^2\nonumber\\
&&-388 n^2 \alpha  k^2-236 n \alpha  k^2-24 \alpha  k^2-2 k^2+24 n^4 k+66 n^3 k+10 n \alpha ^3 k+3 \alpha ^3 k\nonumber\\
&&+67 n^2 k+32 n^2 \alpha ^2 k+22 n \alpha ^2 k+9 \alpha ^2 k+32 n k+46 n^3 \alpha  k+73 n^2 \alpha  k+48 n \alpha  k\nonumber\\
&&+8 \alpha  k+8 n^4+44 n^3+4 n \alpha ^3+8 \alpha ^3+30 n^2+8 n^2 \alpha ^2+40 n \alpha ^2+10 \alpha ^2+12 n^3 \alpha \nonumber\\
&&+\!\left.68 n^2 \alpha +32 n \alpha \right) c^3+\left(12 k^5-14 n k^4+12 \alpha  k^4-23 k^4-47 n^2 k^3-24 \alpha ^2 k^3\right.\nonumber\\
&&+16 n k^3-65 n \alpha  k^3-30 \alpha  k^3+17 k^3-14 n^3 k^2+48 n^2 k^2+25 n \alpha ^2 k^2+33 \alpha ^2 k^2\nonumber\\
&&+n k^2-7 n^2 \alpha  k^2+92 n \alpha  k^2+25 \alpha  k^2-7 k^2+8 n^3 k+n^2 k-9 n \alpha ^2 k-6 \alpha ^2 k\nonumber\\
&&-6 n k+5 n^2 \alpha  k-12 n \alpha  k-6 \alpha  k+k+6 n^3-2 n^2-4 n \alpha ^2-3 \alpha ^2+3 n-9 n \alpha\nonumber\\
&&+\!\left.6 n^2 \alpha  -\alpha \right) c^2+\left(4 k^4-16 n k^3-16 \alpha  k^3-7 k^3-17 n^2 k^2+19 n k^2+16 n \alpha  k^2\right.\nonumber\\
&&\left.+23 \alpha  k^2+4 k^2+12 n^2 k-2 n k-7 n \alpha  k-6 \alpha  k-k+5 n^2-n-5 n \alpha -\alpha \right) c\nonumber\\
&&-2 (k-1) \left(2 k^2-2 n k-k-n\right)\!,
\end{eqnarray}
\newpage
\begin{eqnarray}
d_4&=&\frac{1}{2} n \left(-2 \alpha  c^2+c^2 k^3-3 c^2 k^2 n-5 c^2 k^2-\alpha ^2 c^2 k+\alpha  c^2 k+2 c^2 k n^2+\alpha  c^2 k n\right.\nonumber\\
&&+9 c^2 k n+8 c^2 k-2 c^2 n^2-2 \alpha  c^2 n-6 c^2 n-4 c^2+2 \alpha  c-3 c k^2-3 \alpha  c k+5 c k\nonumber\\
&&\!\left.-2 c-2 k+2\right) \left(\alpha ^2 c^2+\alpha  c^2+c^2 k^3-2 \alpha  c^2 k^2-2 c^2 k^2 n+\alpha ^2 c^2 k-\alpha  c^2 k\right.\nonumber\\
&&+c^2 k n^2+2 \alpha  c^2 k n-c^2 k n-c^2 k+c^2 n^2+2 \alpha  c^2 n+c^2 n+\alpha  c-3 c k^2\nonumber\\
&&+3 \alpha  c k+9 c k n+c k\!\left.+3 c n+2 k\right)\!,\\
d_5&=&c^4 (k-1) (k+1) (n+1) (k-\alpha -1) (k-\alpha -n-1) (k-\alpha -n) (\alpha +k+n)\nonumber\\
&&+c^3 (n+1) (k-\alpha -1) \left(-2 \alpha ^2-\alpha +2 k^4-6 \alpha  k^3-3 k^3 n-5 k^3+4 \alpha ^2 k^2\right.\nonumber\\
&&+7 \alpha  k^2+10 k^2 n^2+23 \alpha  k^2 n+5 k^2 n+2 k^2-2 \alpha ^2 k-6 k n^2-11 \alpha  k n-k n+k\nonumber\\
&&-\!\left.4 n^2-8 \alpha  n-n\right)-c^2 (k-1) (n+1) \left(\alpha ^2+\alpha +5 k^3-10 \alpha  k^2-11 k^2 n-6 k^2\right.\nonumber\\
&&\!\left.+5 \alpha ^2 k+5 \alpha  k+23 \alpha  k n+8 k n+k+7 \alpha  n+3 n\right)\nonumber\\
&&+2 c (k^2-k)  (n+1) (k-\alpha -1),\\
d_6&=&(k-1) (k+1) (k-\alpha -1) (k-n-\alpha -2) (k-n-\alpha -1) (k-n-\alpha ) \nonumber\\
&&\times(k+\alpha -1) (k+n+\alpha ) c^5+(k-1) (k-\alpha -1) \left(k^5-8 n k^4-7 \alpha  k^4-10 k^4\right.\nonumber\\
&&+24 n^2 k^3+5 \alpha ^2 k^3+32 n k^3+29 n \alpha  k^3+19 \alpha  k^3+13 k^3-16 n^3 k^2+7 \alpha ^3 k^2\nonumber\\
&&-53 n^2 k^2+10 n \alpha ^2 k^2+2 \alpha ^2 k^2-30 n k^2-13 n^2 \alpha  k^2-45 n \alpha  k^2-8 \alpha  k^2-2 k^2\nonumber\\
&&-n^4 k-6 \alpha ^4 k+8 n^3 k-31 n \alpha ^3 k-7 \alpha ^3 k+17 n^2 k-45 n^2 \alpha ^2 k-32 n \alpha ^2 k+\alpha ^2 k\nonumber\\
&&+4 n k-21 n^3 \alpha  k-17 n^2 \alpha  k+8 n \alpha  k-2 k-n^4-4 \alpha ^4+4 n^3-17 n \alpha ^3-6 \alpha ^3\nonumber\\
&&+14 n^2-23 n^2 \alpha ^2-18 n \alpha ^2+2 \alpha ^2+6 n-11 n^3 \alpha \!\left.-8 n^2 \alpha +12 n \alpha +2 \alpha \right) c^4\nonumber\\
&&-\left(8 k^6-30 n k^5-16 \alpha  k^5-33 k^5+26 n^2 k^4-6 \alpha ^2 k^4+114 n k^4+27 n \alpha  k^4\right.\nonumber\\
&&+38 \alpha  k^4+50 k^4+5 n^3 k^3+28 \alpha ^3 k^3-59 n^2 k^3+72 n \alpha ^2 k^3+37 \alpha ^2 k^3-159 n k^3\nonumber\\
&&+34 n^2 \alpha  k^3-31 n \alpha  k^3-33 \alpha  k^3-36 k^3-14 \alpha ^4 k^2-7 n^3 k^2-69 n \alpha ^3 k^2-50 \alpha ^3 k^2\nonumber\\
&&+37 n^2 k^2-72 n^2 \alpha ^2 k^2-173 n \alpha ^2 k^2-33 \alpha ^2 k^2+85 n k^2-17 n^3 \alpha  k^2-75 n^2 \alpha  k^2\nonumber\\
&&-n \alpha  k^2+19 \alpha  k^2+14 k^2+8 \alpha ^4 k-n^3 k+42 n \alpha ^3 k+10 \alpha ^3 k-n^2 k+44 n^2 \alpha ^2 k\nonumber\\
&&+49 n \alpha ^2 k-3 \alpha ^2 k+n k+10 n^3 \alpha  k+16 n^2 \alpha  k-n \alpha  k-7 \alpha  k-3 k+6 \alpha ^4+3 n^3\nonumber\\
&&+27 n \alpha ^3+12 \alpha ^3-3 n^2+28 n^2 \alpha ^2+48 n \alpha ^2+5 \alpha ^2-11 n+7 n^3 \alpha +25 n^2 \alpha +6 n \alpha\nonumber\\
&&-\alpha \big) c^3+(k-1) \left(5 k^4-8 n k^3+6 \alpha  k^3-4 k^3-21 n^2 k^2-27 \alpha ^2 k^2-8 n k^2\right.\nonumber\\
&&-53 n \alpha  k^2-23 \alpha  k^2+3 k^2+16 \alpha ^3 k+14 n^2 k+61 n \alpha ^2 k+23 \alpha ^2 k+12 n k\nonumber\\
&&+33 n^2 \alpha  k+70 n \alpha  k+5 \alpha  k-4 k+4 \alpha ^3+7 n^2+19 n \alpha ^2+6 \alpha ^2+4 n+27 n \alpha \nonumber\\
&&\!\left.+11 n^2 \alpha +2 \alpha \right) c^2+(k-1) (k-\alpha -1) \left(3 k^2-19 n k-9 \alpha  k-3 k-5 n-\alpha \right) c \nonumber\\
&&-2 (k-1) k (k-\alpha -1),
\end{eqnarray}
\begin{eqnarray}
d_7&=&-\frac{1}{2} k (k+1) (k-\alpha -1) (k-n-\alpha -2) (k-n-\alpha -1) (k-n-\alpha ) (k+\alpha -1) \nonumber\\
&&\times (k+n+\alpha ) c^5+\frac{1}{2} (k-\alpha -1) (k+\alpha -1)\left(k^5+5 n k^4+5 \alpha  k^4+5 k^4\right.\nonumber\\
&&-25 n^2 k^3-13 \alpha ^2 k^3-23 n k^3-38 n \alpha  k^3-17 \alpha  k^3-8 k^3+19 n^3 k^2+7 \alpha ^3 k^2\nonumber\\
&&+35 n^2 k^2+33 n \alpha ^2 k^2+9 \alpha ^2 k^2+12 n k^2+45 n^2 \alpha  k^2+44 n \alpha  k^2+2 \alpha  k^2+7 n^3 k\nonumber\\
&&+3 \alpha ^3 k+18 n^2 k+13 n \alpha ^2 k+8 \alpha ^2 k+6 n k+17 n^2 \alpha  k+26 n \alpha  k+2 \alpha  k-2 n^3\nonumber\\
&&-2 \alpha ^3-2 n^2-6 n \alpha ^2-2 \alpha ^2\!\left.-6 n^2 \alpha -4 n \alpha \right) c^4+\frac{1}{2} \left(13 k^6-42 n k^5-32 \alpha  k^5\right.\nonumber\\
&&-50 k^5+57 n^2 k^4+6 \alpha ^2 k^4+139 n k^4+80 n \alpha  k^4+85 \alpha  k^4+64 k^4-10 n^3 k^3\nonumber\\
&&+32 \alpha ^3 k^3-149 n^2 k^3+50 n \alpha ^2 k^3-7 \alpha ^2 k^3-161 n k^3+8 n^2 \alpha  k^3-166 n \alpha  k^3\nonumber\\
&&-71 \alpha  k^3-28 k^3-19 \alpha ^4 k^2+18 n^3 k^2-88 n \alpha ^3 k^2-29 \alpha ^3 k^2+109 n^2 k^2\nonumber\\
&&-65 n^2 \alpha ^2 k^2-67 n \alpha ^2 k^2+8 \alpha ^2 k^2+63 n k^2-14 n^3 \alpha  k^2-32 n^2 \alpha  k^2+72 n \alpha  k^2\nonumber\\
&&+13 \alpha  k^2-k^2+\alpha ^4 k-6 n^3 k+10 n \alpha ^3 k-7 \alpha ^3 k+n^2 k+11 n^2 \alpha ^2 k-13 n \alpha ^2 k\nonumber\\
&&+\alpha ^2 k+11 n k+8 n^3 \alpha  k+8 n^2 \alpha  k+22 n \alpha  k+7 \alpha  k+2 k+6 \alpha ^4-2 n^3+22 n \alpha ^3\nonumber\\
&&+6 \alpha ^3-18 n^2+18 n^2 \alpha ^2+16 n \alpha ^2-2 \alpha ^2-10 n+6 n^3 \alpha +\!\left.16 n^2 \alpha -8 n \alpha -2 \alpha \right) c^3\nonumber\\
&&-\frac{1}{2} (k-\alpha -1) \left(13 k^4-41 n k^3+12 \alpha  k^3-24 k^3-20 n^2 k^2-25 \alpha ^2 k^2+58 n k^2\right.\nonumber\\
&&-93 n \alpha  k^2-7 \alpha  k^2+15 k^2+12 n^2 k+11 \alpha ^2 k-9 n k+43 n \alpha  k-3 \alpha  k-4 k+8 n^2\nonumber\\
&&+\!\left.6 \alpha ^2-8 n+26 n \alpha \right) c^2-(k-\alpha -1) \left(2 k^3-17 n k^2-8 \alpha  k^2-k+12 n k\right.\nonumber\\
&&\!\left.+6 \alpha  k-k^2+5 n+\alpha \right) c+2 (k-1) k (k-\alpha -1).
\end{eqnarray}

\end{document}